\title{Even shorter proofs without new variables} 
\author{Adri\'an Rebola-Pardo}{Vienna University of Technology, Austria \and Johannes Kepler Universit\"at Linz, Austria}{adrian.rebola_pardo@jku.at}{https://orcid.org/0000-0001-9234-4377}{}
\authorrunning{A. Rebola-Pardo}
\keywords{Interference, SAT solving, Unsatisfiability proofs, Unsatisfiable cores}
\newcommand{\restr}[2]{{\left.\kern-\nulldelimiterspace#1\vphantom{\big|}\right|_{#2}}}
\newcommand{\Set}[1]{\left\{#1\right\}}
\newcommand{\Cla}[1]{\left[#1\right]}
\newcommand{\Cub}[1]{\left\langle#1\right\rangle}
\newcommand{\smodels}{\mathrel{\models_{\textrm{sat}}}}
\newcommand{\sequiv}{\mathrel{\equiv_{\textrm{sat}}}}
\newcommand{\comp}{\overline}
\newcommand{\mut}[2]{\nabla{#1}.\,{#2}}
\begin{document}

\maketitle

\begin{abstract}
Proof formats for SAT solvers have diversified over the last decade, enabling new features such as extended resolution-like capabilities,
very general extension-free rules, inclusion of proof hints, and pseudo-boolean reasoning. Interference-based methods
have been proven effective, and some theoretical work has been undertaken to better explain their limits and semantics. In this work,
we combine the subsumption redundancy notion from~\cite{BussT19} and the overwrite logic framework from~\cite{Rebola-PardoS18}.
Natural generalizations then become apparent, enabling even shorter proofs of the pigeonhole principle (compared to those from~\cite{HeuleKB17})
and smaller unsatisfiable core generation.
\end{abstract}

\section{Introduction}
\label{sec:intro}

The impressive recent improvements in SAT solving have come coupled with the need to ascertain their results.
While satisfiability results are straightforward to check, unsatisfiability results require massive
proofs, sometimes petabytes in size~\cite{HeuleKM16,Heule18}.
The search for proof systems that enable both easy proof generation and smaller proofs has yield many
achievements~\cite{GoldbergN03,Gelder12,WetzlerHH14,HeuleKB17,Rebola-PardoC18,AltmanningerP20,BussT19,GochtN21,BaekCH21}.

Modern proof systems rely on redundancy properties presenting a phenomenon
known as \emph{interference}~\cite{JarvisaloHB12, HeuleK17B, Rebola-PardoS18}.
Whereas traditional proof systems derive clauses that are implied by the premises,
interference-based proof systems merely require introduced clauses to be consistent with them.
Interference proofs preserve the existence of a model throughout the proof, rather than models themselves.
A somewhat counterintuitive semantics thus arises: introducing a clause
in an interference-based proof system does not only depend on the presence of some clauses,
but also on the absence of some other clauses~\cite{PhilippR17,Rebola-PardoS18}.

The most general interference-based proof system in the literature is known as DSR~\cite{BussT19}.
While its predecesor DPR had success in generating short proofs of the pigeonhole formula
without introducing new variables~\cite{HeuleKB17}, DSR did not seem to succeeded in improving this result,
despite being intuitively well-suited for it.

In this work, we analyze the semantics of DSR proofs extending previous work on DPR proofs~\cite{Rebola-PardoS18}.
We find similar results to that article; in particular, satisfiability-preserving DSR proofs
can be reinterpreted as more traditional, DAG-shaped, model-preserving proofs over
an extension of propositional logic with a \emph{mutation} operator.
Crucially, these DAG-shaped proofs remove the whole-formula dependence interference is characterized by,
enabling an easier analysis of the necessary conditions for
satisfiability-preservation.

This analysis hints at a generalization we call
\emph{weak substition redundancy} (WSR \textipa{[\textprimstress w\textsci z\textschwa\textrhoticity]}),
which allows shorter, more understandable, easier to generate, faster to check proofs.
We demonstrate this by giving an even shorter proof of the pigeonhole formula.
We also provide a couple of examples where smaller unsatisfiable cores can be generated during proof checking,
and fewer lemmas are required during proof generation.

\paragraph*{Interference-based proofs}

Much of proof generation and checking is still done in the same way as a couple decades ago, by
logging the sequence of \emph{learnt clauses} in CDCL checkers, sometimes together with antecedents, and checking those
clauses for simple entailment criteria such as \emph{reverse unit propagation}~(RUP)~\cite{GoldbergN03,ZhangM03}.
Other parts of the proof are generated using more advanced deduction techniques;
even their infrequent use can dramatically decrease the size of
generated proofs~\cite{HeuleHW13a,WetzlerHH14,KieslRH18,HeuleB18},
overcoming not only technical limitations in proof generation,
but also theoretical bounds~\cite{Haken85,Urquhart87,Urquhart99}.
Clause deletion information is also recorded in the proof, which is needed to reduce memory
footprint in checking~\cite{HeuleHW14}.

Much research has been invested on finding ever more powerful proof rules~\cite{JarvisaloHB12,HeuleKB17,BussT19}
that allow to succintly express inprocessing techniques such as
Gaussian elimination~\cite{SoosNC09,Soos10,PhilippR16,ChewH20,GochtN21} or
symmetry breaking~\cite{AloulRMS03,AloulSM06,HeuleHW15}.
These proof rules are collectively called \emph{interference-based rules},
since their derivation depends on the whole formula
rather than just on the presence of some specific clauses~\cite{JarvisaloHB12,HeuleK17B,PhilippR17,Rebola-PardoS18}.
One of the most general interference techniques is \emph{substitution redundancy} (SR), which allows a version of
reasoning without loss of generality~\cite{BussT19}; this technique has been recently lifted to
pseudo-Boolean reasoning with impressive results~\cite{GochtN21}.

\paragraph*{Substitution redundancy and the pigeonhole problem}

A previous version of SR, called \emph{propagation redundancy}~(PR)~\cite{HeuleKB17}, was successful
in achieving short proofs of the pigeonhole problem, known for having exponential proofs in resolution~\cite{Haken85}
and polynomial yet cumbersome proofs in extended resolution~\cite{Cook76}.
The proof from~\cite{HeuleKB17} can be understood in terms of reasoning without loss of
generality~\cite{Rebola-PardoS18}: it assumes that a given pigeon is in a given pigeonhole,
for otherwise we could swap pigeons around.

PR does not have a method to swap the values of variables;
rather, it can only conditionally set them to true or false.
Hence, linearly many reasoning steps are needed to just to achieve the swap.
SR, on the other hand, allows variable swaps, so one could expect that the clause expressing the result
of this swap would satisfy the SR property. Surprisingly, it does not;
in fact, the clause fails to satisfy a requirement that
in its PR version was almost trivial.

\paragraph*{Interference and logical dependency}

Interference-based proofs do not have a ``dependence'' or ``procedence'' structure:
since the ability to introduce a clause is contingent on the whole formula,
no notion of ``antecedents'' exists for SR and its predecessors.
This becomes a problem when computing unsatisfiable cores and trimmed proofs~\cite{NadelRS13};
it also has the potential
to harm the performance of proof checkers, since some techniques that allow skipping unnecessary steps
during proof checking are based on logical dependence~\cite{HeuleHW13}.

This also relates to an issue arising when generating proof fragments for inprocessing techniques.
Sometimes, a clause $C$ cannot be introduced as SR because some lemmas are needed;
the proof generator might know these lemmas and how to derive them.
However, because interference depends on the whole formula,
introducing the lemmas before $C$ can further constrain the requirements
for $C$ to be introduced, demanding yet more lemmas.

\paragraph*{Contributions}

Previous work showed that the semantics of PR can be expressed in terms of \emph{overwrite logic}~\cite{Rebola-PardoS18}.
Overwrite logic extends propositional logic with an \emph{overwrite operator}.
Within overwrite logic, DPR proofs can be regarded as DAG-shaped, model-preserving proofs;
PR introduction can then be shown to behave as reasoning without loss of generality.

In Section~\ref{sec:mutation} we provide an extension to the overwrite logic framework, called \emph{mutation logic},
which elucidates the semantics of DSR proofs. In particular,
model-preserving proofs within mutation logic mimicking satisfiability-preserving DSR proofs
can be extracted, as shown in Section~\ref{ssc:entailment}. This allows a clearer understanding of the SR redundancy rule,
which in turn makes some improvements over SR apparent.

By introducing minor modifications to the definition of SR, in Section~\ref{sec:extensions} we obtain a new,
more powerful redundancy rule called \emph{weak substitution redundancy}~(WSR).
WSR proofs are more succint than DSR proofs, which we demonstrate
by providing a shorter proof of the pigeonhole problem using only $O(n^2)$ clause introductions in Section~\ref{ssc:php}.

Furthermore, WSR enables finer-grained ways to reason about dependency in interference-based proofs.
This can yield shorter proof checking runtimes and smaller trimmed proofs and unsatisfiability cores
when SR clauses are used (Section~\ref{ssc:cores}), as well as easier proof generation techniques by providing
clearer separation for interference lemmas (Section~\ref{ssc:lemmas}).

\section{Preliminaries}
\label{sec:prelim}

Given a \emph{literal} $l$, we denote its \emph{complement} as $\overline l$.
We denote \emph{clauses} by juxtaposing its literals within square brackets,
i.e.\ we denote the clause $l_1 \vee l_2 \vee l_3$ as $\Cla{l_1 l_2 l_3}$.
We similarly denote conjunctions of literals, called \emph{cubes}, as juxtaposed
literals within angle brackets, e.g.\ $\Cub{l_1 l_2 l_3}$.
Crucially, we only consider clauses and cubes that do not contain complementary
literals, as most SAT solvers and proof checkers already make that assumption.
Equivalently, we disallow tautological clauses and unsatisfiable cubes.
We also define complementation for clauses and cubes, i.e.
$\comp{\Cla{l_1 \dots l_n}} = \Cub{\comp{l_1}\dots\comp{l_n}}$ and
$\comp{\Cub{l_1 \dots l_n}} = \Cla{\comp{l_1}\dots\comp{l_n}}$.
SAT solving typically operates over formulas in \emph{conjunctive normal form} (CNF),
which are conjunctions of clauses. Here we regard CNF formulas as finite sets of clauses.

An \emph{atom} is either a literal, or one of the symbols $\top$ or $\bot$ representing
the propositional constants true and false. Complementation is extended to atoms with
$\overline{\top} = \bot$ and $\overline{\bot} = \top$. We can then define the usual propositional
semantics as follows. A \emph{model} $I$ is a total map
from atoms to $\Set{\top, \bot}$ such that $I(\top) = \top$ and $\overline{I(l)} = I(\overline l)$ for all atoms $l$.

We say that $I$ \emph{satisfies} a literal $l$ (written $I \models l$) whenever
$I(l) = \top$. This definition is recursively extended in the usual way to clauses
(disjunctively), cubes and CNF formulas (conjunctively). Similarly, we use the
typical notions of \emph{entailment} (denoted ${\models}$), logical \emph{equivalence}
($\equiv$) and \emph{satisfiability}. We also say that a logical expression $\varphi$
\emph{satisfiability-entails} another expression $\psi$ (denoted $\varphi \smodels \psi$)
whenever, if $\varphi$ is satisfiable, then $\psi$ is satisfiable too. Similarly, $\varphi$
is \emph{satisfiability-equivalent} to $\psi$ (denoted $\varphi \sequiv \psi$) whenever
$\varphi$ and $\psi$ are either both satisfiable or both unsatisfiable (i.e.\ $\varphi \smodels \psi$ and $\psi \smodels \varphi$).

An \emph{atomic substitution} $\sigma$ is a total map from atoms to atoms
satisfying the following constraints:
\begin{romanenumerate}
    \item $\sigma(\top) = \top$.
    \item $\overline{\sigma(l)}$ = $\sigma(\overline l)$ for all atoms $l$.
    \item $\sigma(l) \neq l$ only for finitely many atoms $l$.
\end{romanenumerate}
This definition is essentially equivalent to the substitutions from \cite{BussT19}.
The form presented here makes it easier to compose atomic substitutions with other atomic
substitutions, i.e.\ $(\sigma \circ \tau)(l) = \sigma(\tau(l))$,
and with models, i.e.\ $(I \circ \sigma)(l) = I(\sigma(l))$; the latter is a model
that satisfies a given logical expression $\varphi$ iff $I$ satisfies the expression
resulting from applying the substitution $\sigma$ to $\varphi$.

Note that atomic substitutions have a finite representation: only finitely
many literals are mapped to atoms other than themselves, and
giving the mapping for one polarity fixes the mapping for the other polarity.
Hence, one can represent a substitution as a set of mappings
$\Set{x_1 \mapsto l_1,\dots,x_n \mapsto l_n}$ where the $x_i$ are pairwise distinct
variables, the $l_i$ are atoms, and any variable other than the $x_i$ is mapped to itself.

Our restriction that clauses must be non-tautological is somewhat at odds
with the concept of substitutions. An atomic substitution $\sigma$ \emph{trivializes}
a clause $C$ if either:
\begin{alphaenumerate}
    \item there is a literal $l \in C$ with $\sigma(l) = \top$.
    \item there are two literals $l, k \in C$ with $\sigma(l) = \overline{\sigma(k)}$.
\end{alphaenumerate}
Applying $\sigma$ to $C$ yields a tautology whenever $\sigma$ trivializes $C$, and a
(non-tautological) clause otherwise.
Then we can define the \emph{reduct} of a clause $C$ or a CNF formula $F$ by an atomic
substitution $\sigma$ as:
\begin{align*}
    \restr{C}{\sigma} = {} & \Cla{\sigma(l) \mid l \in C \text{ and } \sigma(l) \neq \bot} \text{, \quad if $\sigma$ does not trivialize $C$} \\
    \restr{F}{\sigma} = {} & \Set{\restr{C}{\sigma} \mid C \in F \text{ and $\sigma$ does not trivialize C}}
\end{align*}

\begin{lemma}
\label{lem:reduct}
    Let $C$ be a clause, $F$ be a CNF formula, and $\sigma$ be an atomic substitution.
    The following then hold:
    \begin{romanenumerate}
        \item \label{itm:reduct:trivial}
        $\sigma$ trivializes $C$ if and only if $I \circ \sigma \models C$
        for all models $I$.
        \item \label{itm:reduct:clause}
        If $\sigma$ does not trivialize $C$, then
        $I \circ \sigma \models C$ if and only if $I \models \restr{C}{\sigma}$
        for all models $I$.
        \item \label{itm:reduct:cnf}
        $I \circ \sigma \models F$ if and only if
        $I \models \restr{F}{\sigma}$ for all models $I$.
    \end{romanenumerate}
\end{lemma}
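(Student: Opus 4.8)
The plan is to prove the three statements in order, since each one feeds into the next. Throughout I will unpack the definition of $I \circ \sigma$, namely $(I \circ \sigma)(l) = I(\sigma(l))$, and the definition of trivialization.

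First I would prove \eqref{itm:reduct:trivial}. For the forward direction, suppose $\sigma$ trivializes $C$. In case (a) there is $l \in C$ with $\sigma(l) = \top$, so $(I \circ \sigma)(l) = I(\top) = \top$, hence $I \circ \sigma \models l$ and therefore $I \circ \sigma \models C$ for every $I$. In case (b) there are $l, k \in C$ with $\sigma(l) = \comp{\sigma(k)}$; since $I(\sigma(l)) = I(\comp{\sigma(k)}) = \comp{I(\sigma(k))}$, one of $\sigma(l), \sigma(k)$ is satisfied by $I$, so again $I \circ \sigma \models C$. For the contrapositive of the converse, suppose $\sigma$ does not trivialize $C$; I must build a model $I$ with $I \circ \sigma \not\models C$. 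Since $\sigma$ does not trivialize $C$, for every $l \in C$ we have $\sigma(l) \neq \top$ (i.e.\ $\sigma(l) \in \{\bot\} \cup \text{literals}$) and no two literals of $C$ map to complementary atoms; so the set $\{\sigma(l) \mid l \in C, \sigma(l) \neq \bot\}$ is a consistent set of literals, and we can pick a model $I$ falsifying all of them (assigning the remaining atoms arbitrarily). Then for each $l \in C$, either $\sigma(l) = \bot$, giving $I(\sigma(l)) = \bot$, or $\sigma(l)$ is a literal we falsified, again giving $I(\sigma(l)) = \bot$; hence $I \circ \sigma \not\models C$.

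Next I would prove \eqref{itm:reduct:clause}. Assume $\sigma$ does not trivialize $C$, so $\restr{C}{\sigma} = \Cla{\sigma(l) \mid l \in C, \sigma(l) \neq \bot}$ is a well-defined non-tautological clause. The key observation is the chain of equivalences: $I \circ \sigma \models C$ iff there is $l \in C$ with $I(\sigma(l)) = \top$; since $I(\bot) = \bot$, such an $l$ must have $\sigma(l) \neq \bot$, i.e.\ $\sigma(l) \in \restr{C}{\sigma}$; conversely every literal of $\restr{C}{\sigma}$ is $\sigma(l)$ for some such $l \in C$. Hence $I \circ \sigma \models C$ iff $I \models \sigma(l)$ for some literal $\sigma(l) \in \restr{C}{\sigma}$, which is exactly $I \models \restr{C}{\sigma}$.

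Finally, \eqref{itm:reduct:cnf} follows by combining the first two parts. Fix a model $I$. We have $I \circ \sigma \models F$ iff $I \circ \sigma \models C$ for every $C \in F$. Split the clauses of $F$ into those trivialized by $\sigma$ and those not. By \eqref{itm:reduct:trivial}, every trivialized $C$ satisfies $I \circ \sigma \models C$ automatically, and these contribute no clause to $\restr{F}{\sigma}$; by \eqref{itm:reduct:clause}, for each non-trivialized $C$ we have $I \circ \sigma \models C$ iff $I \models \restr{C}{\sigma}$, and $\restr{C}{\sigma} \in \restr{F}{\sigma}$. Therefore $I \circ \sigma \models F$ iff $I \models \restr{C}{\sigma}$ for all non-trivialized $C \in F$, i.e.\ iff $I \models \restr{F}{\sigma}$.

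I do not expect a genuine obstacle here; the statement is essentially a bookkeeping lemma. The one point requiring a little care is the converse direction of \eqref{itm:reduct:trivial}: one must check that ``$\sigma$ does not trivialize $C$'' is exactly the condition guaranteeing that $\{\sigma(l) \mid l \in C, \sigma(l)\neq\bot\}$ is a consistent literal set, so that a falsifying model actually exists — this is where conditions (a) and (b) in the definition of trivialization are used in tandem, and it is worth spelling out that (a) rules out $\top$ appearing and (b) rules out a complementary pair. A second minor subtlety is that $\restr{C}{\sigma}$ must itself be non-tautological for the notation $I \models \restr{C}{\sigma}$ to make sense under the paper's blanket convention, but this is precisely guaranteed by the assumption that $\sigma$ does not trivialize $C$, as already noted right before the lemma.
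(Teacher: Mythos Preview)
Your proof is correct and uses essentially the same ideas as the paper; the only structural difference is ordering. The paper proves \eqref{itm:reduct:clause} first and then leverages it to get the converse of \eqref{itm:reduct:trivial} in one line (a falsifying model of the non-tautological clause $\restr{C}{\sigma}$ exists and, by \eqref{itm:reduct:clause}, witnesses $I \circ \sigma \not\models C$), whereas you prove \eqref{itm:reduct:trivial} directly by constructing that same model inline and only afterwards establish \eqref{itm:reduct:clause}.
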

\begin{proof}
    Let us first show~(\ref{itm:reduct:clause}). First, observe that $I$ satisfies
    $\restr{C}{\sigma}$ if and only if $I$ satisfies $\sigma(l)$ for some literal $l \in C$.
    But this is equivalent to $(I \circ \sigma)(l) = \top$
    for some $l \in C$, which is precisely $I \circ \sigma \models C$.

    We now show~(\ref{itm:reduct:trivial}). The ``only if'' implication is straightforward
    from the definition of a trivializing substitution. For the ``if'' implication,
    we show that if $\sigma$ does not trivialize $C$, then $I \circ \sigma$ falsifies $C$
    for some model $I$. Claim~(\ref{itm:reduct:clause}) gives out that any model $I$ falsifying
    $\restr{C}{\sigma}$, which exists because it is a (non-tautological) clause, has this property.

    Claim~(\ref{itm:reduct:cnf}) then follows easily from claims~(\ref{itm:reduct:trivial})
    and~(\ref{itm:reduct:clause}).
\end{proof}

Note that, for atomic substitutions that only map variables to the constants $\top$ or $\bot$,
there exists a correspondence with cubes. In particular, given variables $x_1,\dots,x_n,y_1,\dots,y_m$,
the cube $Q$ is bijectively associated to the atomic substitution $Q^\star$ where:
\begin{align*}
    Q = {} & \Cub{x_1 \dots x_n \, \comp{y_1} \dots \comp{y_m}} &
    Q^\star = {} & \Set{x_1 \mapsto \top, \dots, x_n \mapsto \top, y_1 \mapsto \bot, \dots, y_m \mapsto \bot}
\end{align*}

\subsection{Interference-based redundancy notions}
\label{ssc:interference}

Throughout the last decade, several redundancy notions collectively called
\emph{interference-based rules} have appeared in the literature~\cite{JarvisaloHB12,HeuleKB17,HeuleK17B,BussT19}.
Originating from clause elimination techniques~\cite{JarvisaloBH10,JarvisaloBH12,KieslSTB18}, interference can be also used to introduce
clauses in the formula; unlike more classical techniques, though, these clauses
do not need to be implied by the formula, but rather \emph{consistent} with it.
Specifically, given a CNF formula $F$, introducing a clause $C$ through interference
requires that $F \sequiv F \cup \Set{C}$.

Many interference-based rules are based on a criterion for entailment called
\emph{reverse unit propagation} (RUP)~\cite{GoldbergN03}. A clause $C$ is called a \emph{RUP clause} over
a CNF formula $F$ whenever unit propagation applied to $F$ using the assumption literals
$\comp C$ yields a conflict; under these circumstances, it can be shown that $F \models C$.

RUP clauses can be characterized in terms of resolution proofs. In particular,
a clause $C$ is a RUP clause over $F$ if and only if $C$ can be derived from $F$ through a
derivation of a particular form, called a \emph{subsumption-merge chain}~\cite{PhilippR17}.
These are derivations as shown in Figure~\ref{fig:smchain}, starting with a subsumption inference
and followed by a number of \emph{resolution merges}, also known as \emph{self-subsuming resolutions}~\cite{EenB05}.
The specifics of subsumption-merge chains in relation to RUPs are not quite relevant for
our discussion; we direct the interested reader to~\cite{PhilippR17,Rebola-PardoW20}.
For us, it suffices to know that checking whether $C$ is a RUP clause over $F$ is essentially
the same as finding the subsumption-merge chain that derives $C$ from $F$~\cite{ZhangM03}.

\begin{figure}
\caption{General form of a subsumption-merge chain~\cite{PhilippR17,Rebola-PardoW20} deriving the clause $A_n$ from premises $E_0,\dots,E_n$.
$\textsc{sub}$ represents the subsumption rule, so it requires $E_0 \subseteq A_0$. $\textsc{res}$ represents the resolution rule,
which can be applied if there is a literal $l_i \in A_{i-1}$ with $\comp l_i \in E_i$; in this case, $A_i = A_{i-1} \setminus \Set{l_i} \vee E_i \setminus \Set{\vphantom{l_i}\smash{\comp l_i}}$.
Subsumption-merge chains additionally require that the $\textsc{res}$ inferences are actually self-subsuming~\cite{EenB05},
i.e.\ $E_i \setminus \Set{\vphantom{l_i}\smash{\comp l_i}} \subseteq A_{i-1}$. Under these conditions, the clause $A_n$ is a RUP clause over any CNF formula containing
$E_0,\dots,E_n$. Conversely, any RUP clause over $F$ can be derived as $A_n$ through a subsumption-merge chain from some clauses $E_0,\dots,E_n \in F$.
In fact, the $E_i$ are the reason clauses used during unit propagation in a RUP check in reverse ordering (up to a topologically-compatible reordering)~\cite{Gelder12,PhilippR17}.}
\label{fig:smchain}
\centering
\begin{tikzpicture}
    \node[inner sep = 0pt, outer sep = 0pt, anchor = base] (an) {$A_n$};
    \coordinate[above = 2.5ex of an.base] (mn) {};
    \coordinate[left = 1cm of mn] (ln) {};
    \coordinate[right = 1cm of mn] (rn) {};
    \draw[] (ln) -- (rn);
    \node[inner sep = 0pt, outer sep = 0pt, above right = 1ex and 0.5ex of ln, anchor = base west] (ap) {$A_{n-1}$};
    \node[inner sep = 0pt, outer sep = 0pt, above left = 1ex and 0.5ex of rn, anchor = base east] (en) {$E_n$};
    \node[inner sep = 0pt, outer sep = 0pt, above left = 3ex and 4ex of ap.base, anchor = base] (dots) {$\ddots$};
    \node[inner sep = 0pt, outer sep = 0pt, above left = 0ex and 2ex of dots, anchor = base] (a2) {$A_2$};
    \coordinate[above = 2.5ex of a2.base] (m2) {};
    \coordinate[left = 1cm of m2] (l2) {};
    \coordinate[right = 1cm of m2] (r2) {};
    \draw[] (l2) -- (r2);
    \node[inner sep = 0pt, outer sep = 0pt, above right = 1ex and 0.5ex of l2, anchor = base west] (a1) {$A_1$};
    \node[inner sep = 0pt, outer sep = 0pt, above left = 1ex and 0.5ex of r2, anchor = base east] (e2) {$E_2$};
    \coordinate[above = 2.5ex of a1.base] (m1) {};
    \coordinate[left = 1cm of m1] (l1) {};
    \coordinate[right = 1cm of m1] (r1) {};
    \draw[] (l1) -- (r1);
    \node[inner sep = 0pt, outer sep = 0pt, above right = 1ex and 0.5ex of l1, anchor = base west] (a0) {$A_0$};
    \node[inner sep = 0pt, outer sep = 0pt, above left = 1ex and 0.5ex of r1, anchor = base east] (en) {$E_1$};
    \coordinate[above = 2.5ex of a0.base] (m0) {};
    \coordinate[left = 0.5cm of m0] (l0) {};
    \coordinate[right = 0.5cm of m0] (r0) {};
    \draw[] (l0) -- (r0);
    \node[inner sep = 0pt, outer sep = 0pt, above = 1ex of m0, anchor = base] (e0) {$E_0$};
    \node[inner sep = 0pt, outer sep = 0pt, left = 1ex of ln, anchor = mid east, font = \footnotesize] {$\textsc{res}$};
    \node[inner sep = 0pt, outer sep = 0pt, left = 1ex of l2, anchor = mid east, font = \footnotesize] {$\textsc{res}$};
    \node[inner sep = 0pt, outer sep = 0pt, left = 1ex of l1, anchor = mid east, font = \footnotesize] {$\textsc{res}$};
    \node[inner sep = 0pt, outer sep = 0pt, left = 1ex of l0, anchor = mid east, font = \footnotesize] {$\textsc{sub}$};
\end{tikzpicture}
\end{figure}

Building on RUP clauses, many redundancy notions can be defined. The most relevant for our discussion
are, in increasing generality order, \emph{resolution-asymmetric tautologies} (RATs),
\emph{propagation redundancies} (PRs) and \emph{substitution-redundancies} (SRs):

\begin{definition}
\label{def:redundancy}
    Let $C$ be a clause and $F$ be a CNF formula.
    \begin{romanenumerate}
        \item We say $C$ is a RAT clause~\cite{JarvisaloHB12} over $F$ upon a literal $l$
        whenever $l \in C$ and, for every clause $D \in F$ with $\comp l \in D$, the
        expression $C \vee D \setminus \Set{\vphantom{l}\smash{\comp l}}$ is either a tautology or a RUP
        clause over $F$.
        \item We say $C$ is a PR clause~\cite{HeuleKB17} over $F$ upon a cube $Q$ whenever
        $Q \models C$ (i.e.\ $Q \cap C \neq \emptyset$) and each clause in $\restr{F}{Q^\star}$ is a RUP clause over
        $\restr{F}{{\comp C}^\star}$.
        \item We say $C$ is a SR clause~\cite{BussT19} over $F$ upon an atomic substitution
        $\sigma$ whenever $\sigma$ trivializes $C$ and each clause in $\restr{F}{\sigma}$
        is a RUP clause over $\restr{F}{{\comp C}^\star}$.
    \end{romanenumerate}
\end{definition}

For a given \emph{witness} (i.e.\ the literal $l$, the cube $Q$ or the substitution $\sigma$),
checking whether a clause $C$ is a RAT/PR/SR clause over $F$ upon the corresponding witness
is polynomial over the size of $F$. In particular, this check takes at most one RUP check for each
clause~\cite{BussT19}; and RUP checking is quadratic on the size of $F$~\cite{Gelder12}.
Finding the right witness is nevertheless NP-complete~\cite{HeuleKB17}.
These redundancy notions satisfy the general condition for interference:

\begin{theorem}
    Let $C$ be a clause and $F$ be a CNF formula, and assume either of the following:
    \begin{alphaenumerate}
        \item $C$ is a RAT clause over $F$ upon a literal $l$ \cite{JarvisaloHB12}.
        \item $C$ is a PR clause over $F$ upon some cube $Q$ \cite{HeuleKB17}.
        \item $C$ is an SR clause over $F$ upon some atomic substitution $\sigma$ \cite{BussT19}.
    \end{alphaenumerate}
    Then, $F \sequiv F \cup \Set{C}$.
\end{theorem}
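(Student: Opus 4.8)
The plan is to reduce cases~(a) and~(b) to case~(c), and then prove~(c) by constructing, from any model of $F$, a model of $F \cup \Set{C}$. Note first that only the forward satisfiability-entailment $F \smodels F \cup \Set{C}$ requires an argument: the converse $F \cup \Set{C} \smodels F$ is immediate, since any model of $F \cup \Set{C}$ already satisfies $F$.

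For the reduction, a PR clause $C$ over $F$ upon a cube $Q$ is exactly an SR clause over $F$ upon $\sigma = Q^\star$: from $Q \models C$ we obtain a literal $l \in Q \cap C$ with $Q^\star(l) = \top$, so $Q^\star$ trivializes $C$ (via case~(a) of the definition of trivialization), and the RUP requirement in~(b) is verbatim the one in~(c). Likewise a RAT clause over $F$ upon a literal $l$ is a PR clause over $F$ upon the singleton cube $\Cub{l}$ --- this is the one place where a short unfolding of definitions is needed, checking that $\restr{F}{\Cub{l}^\star}$ consists precisely of the clauses $C \vee D\setminus\Set{\comp l}$ (for $D \in F$ with $\comp l \in D$) that are not tautological, together with the clauses of $F$ unaffected by $l$, and that RUP-ness of these over $\restr{F}{{\comp C}^\star}$ matches the RAT condition. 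Hence it suffices to prove~(c).

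So let $C$ be an SR clause over $F$ upon an atomic substitution $\sigma$, and let $I \models F$. If $I \models C$ there is nothing to do, so assume $I \not\models C$, i.e.\ $I \models \comp C$ with $\comp C$ read as a cube. Because $I$ satisfies every literal of $\comp C$, one checks directly that $I \circ {\comp C}^\star = I$; hence $I \circ {\comp C}^\star \models F$, and Lemma~\ref{lem:reduct}(\ref{itm:reduct:cnf}) gives $I \models \restr{F}{{\comp C}^\star}$. By hypothesis every clause of $\restr{F}{\sigma}$ is a RUP clause over $\restr{F}{{\comp C}^\star}$ and hence entailed by it, so $\restr{F}{{\comp C}^\star} \models \restr{F}{\sigma}$ and therefore $I \models \restr{F}{\sigma}$; applying Lemma~\ref{lem:reduct}(\ref{itm:reduct:cnf}) once more yields $I \circ \sigma \models F$. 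Finally, since $\sigma$ trivializes $C$, Lemma~\ref{lem:reduct}(\ref{itm:reduct:trivial}) gives $I \circ \sigma \models C$. Thus $I \circ \sigma \models F \cup \Set{C}$, so $F \cup \Set{C}$ is satisfiable; this is the forward direction, and with the trivial converse we conclude $F \sequiv F \cup \Set{C}$.

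I expect the main obstacle to lie purely in the interfaces rather than in the semantic core: first, establishing $I \circ {\comp C}^\star = I$ when $I \models \comp C$, which is what allows the RUP check --- stated against $\restr{F}{{\comp C}^\star}$ rather than against $F$ --- to be transported back to the model $I$ already in hand; and second, the bookkeeping that the reductions of~(a) and~(b) to~(c) genuinely preserve both the trivialization of the witness and the RUP condition. Once those are in place, turning ``$I$ falsifies $C$'' into ``$I \circ \sigma$ satisfies $F \cup \Set{C}$'' is just three invocations of Lemma~\ref{lem:reduct}.
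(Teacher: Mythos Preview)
The paper does not actually prove this theorem; it simply states it with citations to the original sources. Your semantic argument for case~(c) is correct and is essentially the same construction the paper later gives for the stronger Theorem~\ref{thm:srsemantics}: from $I \models F$ with $I \not\models C$, pass to $I \circ \sigma$ and use Lemma~\ref{lem:reduct}. The only cosmetic difference is that you work directly from Definition~\ref{def:redundancy}(iii) (RUP over $\restr{F}{\comp C^\star}$), whereas the paper's proof of Theorem~\ref{thm:srsemantics} uses the equivalent per-clause reformulation it adopts immediately after this theorem, and hence argues clause-by-clause rather than via $\restr{F}{\comp C^\star} \models \restr{F}{\sigma}$. Your reduction of~(b) to~(c) via $\sigma = Q^\star$ is also fine.

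There is, however, a genuine gap in your reduction of~(a) to~(b): a RAT clause upon $l$ is \emph{not} in general a PR clause upon the singleton cube $\Cub{l}$. Take $F = \Set{\Cla{\comp a}}$ and $C = \Cla{l\,a}$. No clause of $F$ contains $\comp l$, so $C$ is vacuously RAT upon $l$. But $\restr{F}{\Cub{l}^\star} = \Set{\Cla{\comp a}}$ and $\restr{F}{\comp C^\star} = \emptyset$ (since $\comp C^\star$ sends $\comp a \mapsto \top$, trivializing $\Cla{\comp a}$), and $\Cla{\comp a}$ is not RUP over the empty formula. The problem is exactly the clauses you called ``unaffected by $l$'': RAT imposes no condition on them, yet PR upon $\Cub{l}$ does, and the condition can fail when such a clause contains a literal of $\comp C$ other than $\comp l$. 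The standard fix is to take as PR witness the cube $Q$ obtained from $\comp C$ by replacing $\comp l$ with $l$; then $Q^\star$ trivializes every clause containing a literal of $\comp C \setminus \Set{\comp l}$, and the remaining obligations line up with the RAT condition. With that correction your reduction goes through.
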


In this paper we will mostly focus on substitution redundancy, which is the most general of them.
However, we will use an equivalent definition, as per~\cite[Lemma 5]{BussT19}: instead of the
condition that each clause in $\restr{F}{\sigma}$ is a RUP clause over $\restr{F}{{\comp C}^\star}$,
we require that, for each clause $D \in F$, either $\sigma$ trivializes $D$,
or $\comp C \models \restr{D}{\sigma}$, or the clause $C \vee \restr{D}{\sigma}$ is a RUP clause
over $F$.

\subsection{Proof systems for SAT solving}

RUP clauses provided the first effective solution to the problem of certifying an unsatisfiability
result from a SAT solver. In particular, learnt clauses in a CDCL SAT solver~\cite{SilvaS96} are
RUP clauses~\cite{GoldbergN03, Gelder12}, so checking that each clause in the list of learnt clauses
is a RUP clause over the previously derived formula amounts to certifying that the last
clause in the list is entailed by the solved formula. If that clause is the empty clause,
the list constitutes a refutation.

However, the proof complexity of RUP proofs is rather poor:
there exist many simple problems whose refutations in resolution-based proof systems, such as RUP,
are exponential on the size of the refuted formula~\cite{Haken85,Urquhart87,Urquhart99}.
In fact, this problem extends to (purely) CDCL SAT solvers, on which these results impose
a performance upper bound~\cite{PipatsrisawatD11,BeameKS04}.

To alleviate the impact of these results, some inprocessing techniques were developed,
including reencoding of cardinality constraints~\cite{BiereBLM14,MantheyHB12}, Gaussian
elimination over $\mathbb{Z}_2$~\cite{SoosNC09,Soos10} and symmetry
breaking~\cite{AloulRMS03,AloulSM06}. Unfortunately, the aforementioned limitations
still apply to the generated refutation, so emitting a RUP proof would still
take exponential time.

Allowing interference-based reasoning in the proof led
to a vast number of proof formats~\cite{HeuleHW13a,WetzlerHH14,HeuleKB17,Cruz-FilipeMS17,Cruz-FilipeHHKS17,Lammich17,TanHM21,BussT19,BaekCH21}
and proof generation techniques~\cite{SinzB06,MantheyP14,HeuleHW15,PhilippR16,ChewH20,BryantH21,GochtN21,BryantBH22}.
The proof complexity of these systems is equivalent to that of
extended resolution~\cite{Tseitin1983,reckhow75_phd,KieslRH18,HeuleB18}, for which no
exponential lower bounds are known.

Unlike more traditional, DAG-shaped proofs, interference-based proofs take the
form of a list of \emph{clause introductions} and \emph{deletions}.
Starting with the input CNF formula $F$, clause introductions of the form $\textbf{i: }C$
add a clause $C$ to $F$, whereas clause deletions of the form $\textbf{d: }C$ remove
$C$ from $F$. At each point in the proof there is an \emph{accumulated formula}
where all the previous instructions in the proof have been applied.

Just as DAG-shaped proofs like resolution maintain a soundness invariant (i.e.\ each
model satisfying the premises of the proof also satisfies the conclusion),
interference-based proofs are \emph{satisfiability-preserving}~\cite{PhilippR17}:
at any point in an interference-based proof of $F$, the accumulated formula $G$
satisfies $F \smodels G$. This is guaranteed by imposing some conditions on
clause introductions; clause deletions do not have any requirements, because
deleting a clause is always satisfiability-preserving.

Different proof systems then arise from different conditions on clause introductions.
\emph{Delete Resolution Asymmetric Tautology} (DRAT) requires them to be either RUP clauses
or RAT clauses over the accumulated formula~\cite{HeuleHW13a, WetzlerHH14},
and similarly for \emph{Delete Propagation Redundancy}~(DPR)~\cite{HeuleKB17}
and \emph{Delete Substitution Redundancy}~(DSR)~\cite{BussT19}.
Note that, in the case of introducing a RAT/PR/SR clause (as opposed to a RUP clause),
the witness $\omega$ must be specified; in this case we denote it as $\textbf{i: }C,\,\omega$.

\subsection{Overwrite logic}
\label{ssc:overwrite}

Interference-based proofs represent a structural and semantic departure from traditional
proof systems. This is due to the \emph{non-monotonic} properties of SR: an SR clause
over $F$ upon~$\sigma$ is not necessarily an SR clause over a formula containing $F$.~\cite{JarvisaloHB12,PhilippR17}.

The consequences of non-monotonicity are far-reaching.
Interference-based proofs cannot be freely composed as, for example,
resolution proofs can~\cite{HeuleB15b}: the correctness of a clause introduction depends,
in principle, on the whole formula, which motivated the name ``interference'' as opposed
to ``inference''~\cite{HeuleK17B}.

DPR proofs can be seen as model-preserving, tree-shaped, monotonic proofs over a more general logic, known
as overwrite logic~\cite{Rebola-PardoS18}. There, a model $I$ can be \emph{conditionally overwritten}
with an \emph{overwrite rule} of the form $(Q \coloneq T)$, where $Q$ and $T$ are cubes.
Then, the model $I \circ (Q \coloneq T)$ is defined as $I \circ Q^\star$ if $I \models T$,
or as $I$ otherwise. That is, if $T$ is satisfied, then the minimal assignment satisfying $Q$
is overwritten on $I$.
Instead of clauses, overwrite logic deals with
\emph{overwrite clauses}, represented as $\mut{\varepsilon_1 \dots \varepsilon_n}{C}$,
where $C$ is a clause and the $\varepsilon_i = (Q_i \coloneq T_i)$ are overwrite rules.
Such an overwrite clause is satisfied by a model $I$ whenever
$I \circ \varepsilon_1 \circ \dots \circ \varepsilon_n \models C$.

This framework accurately expresses the reasoning performed by PR
introduction~\cite{Rebola-PardoS18}:
\begin{theorem}
    Let $C$ be a PR clause over a CNF formula $F$ upon a cube $Q$. Then, the implication
    $F \models \mut{(Q \coloneq \comp C)}{(F \cup \Set{C})}$ holds.
\end{theorem}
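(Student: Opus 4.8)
The plan is to unfold the definition of the overwrite operator $(Q \coloneq \comp C)$ on an arbitrary model $I \models F$ and show that $I \circ (Q \coloneq \comp C) \models F \cup \Set{C}$, which is exactly what the target entailment asserts. By definition, $I \circ (Q \coloneq \comp C)$ equals $I \circ Q^\star$ if $I \models \comp C$, and equals $I$ otherwise. So I would split into two cases on whether $I$ falsifies $C$.

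First, the easy case: if $I \not\models \comp C$, then $I \models C$, and $I \circ (Q \coloneq \comp C) = I$, which satisfies $F$ by assumption and now also $C$; done. The interesting case is $I \models \comp C$, i.e.\ $I$ falsifies every literal of $C$. Here $I \circ (Q \coloneq \comp C) = I \circ Q^\star$, and I must show $I \circ Q^\star \models F \cup \Set{C}$. For the clause $C$ itself: since $C$ is a PR clause over $F$ upon $Q$, we have $Q \models C$, meaning $Q \cap C \neq \emptyset$; pick a literal $l \in Q \cap C$. Then $Q^\star(l) = \top$, so $(I \circ Q^\star)(l) = I(Q^\star(l)) = I(\top) = \top$, hence $I \circ Q^\star \models C$. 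For the formula $F$: I use the PR condition that every clause in $\restr{F}{Q^\star}$ is a RUP clause over $\restr{F}{{\comp C}^\star}$, so in particular $\restr{F}{{\comp C}^\star} \models \restr{F}{Q^\star}$. Now $I \models F$ and $I \models \comp C$; since $\comp C$ corresponds to the cube whose literals are exactly the complements of those in $C$ and $I$ falsifies all of $C$, the substitution ${\comp C}^\star$ maps each variable occurring in $C$ to the constant that $I$ already assigns it, so $I \circ {\comp C}^\star = I$ on the relevant atoms; more precisely $I = I \circ {\comp C}^\star$ as far as satisfaction of $F$ is concerned — actually since $I \models \comp C$, Lemma~\ref{lem:reduct}(\ref{itm:reduct:cnf}) combined with $I \models F$ gives $I \models \restr{F}{{\comp C}^\star}$ (because $I \circ {\comp C}^\star$ agrees with $I$ when $I \models \comp C$). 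Then entailment yields $I \models \restr{F}{Q^\star}$, and another application of Lemma~\ref{lem:reduct}(\ref{itm:reduct:cnf}) in the other direction gives $I \circ Q^\star \models F$. Combining, $I \circ Q^\star \models F \cup \Set{C}$, as required.

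The one point needing care — and what I expect to be the main obstacle — is the bookkeeping around ${\comp C}^\star$: the notation $\comp C$ denotes the \emph{cube} complementary to the clause $C$, and $(Q \coloneq \comp C)$ uses $\comp C$ as the guard cube $T$, while ${\comp C}^\star$ is the associated atomic substitution under the $Q \mapsto Q^\star$ correspondence. I must be careful that $I \models \comp C$ (as a cube) is precisely the condition triggering the overwrite, and that under that condition $I \circ {\comp C}^\star$ and $I$ satisfy exactly the same CNF formulas; this is because ${\comp C}^\star$ only substitutes the variables appearing in $C$, sending each to the truth value $I$ assigns it (given $I \models \comp C$), so composing with it is idempotent on $I$ relative to satisfaction. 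Once this identification is made cleanly via Lemma~\ref{lem:reduct}, the rest is the routine two-case argument sketched above. A secondary subtlety is that $Q^\star$ might trivialize some clauses of $F$; but Lemma~\ref{lem:reduct}(\ref{itm:reduct:cnf}) already handles trivialization transparently, since a trivialized clause is satisfied by every $I \circ \sigma$, so no separate case is needed.
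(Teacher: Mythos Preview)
Your proposal is correct. The paper does not actually prove this PR theorem (it is cited from~\cite{Rebola-PardoS18}), but it does prove the SR analog, Theorem~\ref{thm:srsemantics}, and your argument follows the same skeleton: case-split on whether $I \models C$, dispose of the trivial branch, and in the branch $I \models \comp C$ show that the mutated model satisfies both $C$ (via the witness condition) and $F$ (via the RUP condition together with Lemma~\ref{lem:reduct}).

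The only noteworthy difference is organizational. The paper's SR proof works clause-by-clause using the equivalent three-case characterization of SR (trivialization, $\comp C \models \restr{D}{\sigma}$, or RUP of $C \vee \restr{D}{\sigma}$), whereas you work globally from the reduct-based definition of PR: you observe that $I \models \comp C$ forces $I \circ {\comp C}^\star = I$, hence $I \models \restr{F}{{\comp C}^\star}$, then use the entailment $\restr{F}{{\comp C}^\star} \models \restr{F}{Q^\star}$ coming from the RUP hypothesis, and finally pull back through Lemma~\ref{lem:reduct}(\ref{itm:reduct:cnf}) to get $I \circ Q^\star \models F$. This is slightly cleaner for PR because the definition is already phrased in terms of reducts; the paper's clause-by-clause style is tailored to the alternative SR characterization it adopts. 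Both routes are short and neither buys anything substantial over the other. Your handling of the two bookkeeping points (the identification $I \circ {\comp C}^\star = I$ under $I \models \comp C$, and trivialized clauses being absorbed by Lemma~\ref{lem:reduct}(\ref{itm:reduct:cnf})) is accurate.
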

This result means that non-monotonic, satisfiability-preserving reasoning using PR clauses
can be turned into monotonic, model-preserving reasoning in overwrite logic.
\cite{Rebola-PardoS18} further introduces a traditional, DAG-shaped proof system over
overwrite clauses that mimics PR proofs, hence suggesting that the whole-formula dependence
featured by interference-based proof systems can, to some extent, be curbed.

\section{Mutation semantics for DSR proofs}
\label{sec:mutation}

The overwrite logic presented in Section~\ref{ssc:overwrite} was designed to formalize
the semantics of DPR proofs. In particular, models are overwritten with cubes, which act as
witnesses for PR clause introductions.
In order to extend this framework to DSR proofs, the role of cubes must now be fulfilled
by atomic substitutions. Here we introduce \emph{mutation logic},
which is a straightforward extension of overwrite logic.

In its most general form, a \emph{mutation rule} is an expression $(\sigma \coloneq \tau)$,
where $\sigma$ is an atomic substitution and $\tau$ is any logical expression that can be
evaluated under a model. We call $\tau$ the \emph{trigger} of the rule, and $\sigma$ its
\emph{effect}. Mutation rules themselves are not logical expressions and they cannot be
satisfied or falsified. They are instead intended to codify the idea ``if the trigger
$\tau$ is satisfied, then apply the effect substitution $\sigma$''.
We thus define the \emph{application} of a mutation rule $(\sigma \coloneq \tau)$ to a model
$I$ as:
\begin{equation*}
    I \circ (\sigma \coloneq \tau) = {}
    \begin{cases}
        I \circ \sigma & \text{if }I \models \tau \\
        I & \text{if }I \not\models \tau
    \end{cases}
\end{equation*}

As with overwrite logic, the main difference with propositional logic is the inclusion
of a mutation operator $\nabla$. As in~\cite{Rebola-PardoS18}, one can recursively
define mutation formulas as either propositional formulas, or expressions of the
form $\mut{(\sigma \coloneq \tau)}{\varphi}$ where $\sigma$ is an atomic substitution
and $\varphi$, $\tau$ are mutation formulas. The semantics of the mutation operator are
given by $I \models \mut{(\sigma \coloneq \tau)}{\varphi}$ whenever
$I \circ (\sigma \coloneq \tau) \models \varphi$. In other words: evaluating
$\mut{(\sigma \coloneq \tau)}{\varphi}$ corresponds to evaluating a formula $\varphi^\prime$
obtained from $\varphi$ by applying the effect $\sigma$ to $\varphi$
only if the trigger $\tau$ is satisfied.

This framework is very general, but just as discussed in~\cite{Rebola-PardoS18},
nothing meaningful is lost by introducing some strong restrictions. For the purpose of this paper,
we will only consider \emph{cubic} mutation rules of the form $(\sigma \coloneq Q)$ where $Q$ is a
propositional cube. The logical expressions we will use are of three kinds, where we use
$\mut{\vec{\varepsilon}}{\varphi}$ to denote a nested mutation
$\mut{\varepsilon_1}{{} \dots \mut{\varepsilon_n} \varphi}$ with cubic mutations $\varepsilon_i$:
\begin{itemize}
\item \emph{Mutation clauses} of the form $\mut{\vec{\varepsilon}}{C}$
where $C$ is a propositional clause.
\item \emph{Mutation CNF formulas} (MCNF), which are finite sets of mutation clauses.
The semantics of MCNF formulas are conjunctive, i.e.\ they are satisfied if every mutation
clause in them is satisfied.
\item \emph{Uniformly mutation CNF formulas} (UMCNF) of the form $\mut{\vec{\varepsilon}}{F}$
where $F$ is a propositional CNF formula. $\nabla$ distributes over the propositional
connectives, e.g.\ 
$\mut{\vec{\varepsilon}}{(\varphi_1 \wedge \varphi_2)} \equiv (\mut{\vec{\varepsilon}}{\varphi_1}) \wedge (\mut{\vec{\varepsilon}}{\varphi_2})$.
Hence, UMCNF can be embedded in the fragment of the MCNF formulas that contain clauses
with the same mutation prefix.
\end{itemize}

Similarly to how overwrite logic allows the expression of PR clauses as model-preserving
inferences under an overwrite~\cite{Rebola-PardoS18}, SR clauses become consequences
under a mutation.

\begin{theorem}
\label{thm:srsemantics}
    Let $F$ be a CNF formula and $C$ be an SR clause over $F$ upon an atomic
    mutation $\sigma$. Then, $F \models \mut{(\sigma \coloneq \comp C)}{(F \cup \Set{C})}$.
\end{theorem}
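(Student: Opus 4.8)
The plan is to unfold the definition of the mutation operator and reduce everything to Lemma~\ref{lem:reduct} together with the alternative characterization of SR clauses from \cite[Lemma 5]{BussT19} quoted at the end of Section~\ref{ssc:interference}. Fix a model $I$ with $I \models F$; I must show $I \models \mut{(\sigma \coloneq \comp C)}{(F \cup \Set{C})}$, i.e.\ $I \circ (\sigma \coloneq \comp C) \models F \cup \Set{C}$. The proof splits on whether $I \models \comp C$ or not, following the definition of rule application.

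First, the easy case: if $I \not\models \comp C$, then $I \circ (\sigma \coloneq \comp C) = I$, and moreover $I \models C$ (since falsifying $\comp C$ means satisfying some literal of $C$). Combined with the hypothesis $I \models F$, this immediately gives $I \models F \cup \Set{C}$, as desired. Second, the substantive case: suppose $I \models \comp C$, so that $I \circ (\sigma \coloneq \comp C) = I \circ \sigma$, and I must show $I \circ \sigma \models F \cup \Set{C}$. For the clause $C$ itself: since $\sigma$ trivializes $C$ (this is part of the SR hypothesis), Lemma~\ref{lem:reduct}(\ref{itm:reduct:trivial}) gives $I \circ \sigma \models C$ for \emph{every} model, in particular for this $I$. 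It remains to handle an arbitrary clause $D \in F$, i.e.\ to show $I \circ \sigma \models D$.

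Here I invoke the three-way case distinction from the alternative SR definition: for each $D \in F$, either (a) $\sigma$ trivializes $D$, or (b) $\comp C \models \restr{D}{\sigma}$, or (c) $C \vee \restr{D}{\sigma}$ is a RUP clause over $F$. In case (a), Lemma~\ref{lem:reduct}(\ref{itm:reduct:trivial}) again yields $I \circ \sigma \models D$ outright. In case (b), since $I \models \comp C$ and $\comp C \models \restr{D}{\sigma}$, we get $I \models \restr{D}{\sigma}$; if $\sigma$ trivialized $D$ we are in case (a), so assume it does not, and then Lemma~\ref{lem:reduct}(\ref{itm:reduct:clause}) converts $I \models \restr{D}{\sigma}$ into $I \circ \sigma \models D$. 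In case (c), $C \vee \restr{D}{\sigma}$ is a RUP clause over $F$, hence $F \models C \vee \restr{D}{\sigma}$; since $I \models F$ we get $I \models C \vee \restr{D}{\sigma}$, but $I \models \comp C$ forces $I \not\models C$, so $I \models \restr{D}{\sigma}$, and again (assuming $\sigma$ does not trivialize $D$, else case (a)) Lemma~\ref{lem:reduct}(\ref{itm:reduct:clause}) gives $I \circ \sigma \models D$. This exhausts all clauses of $F \cup \Set{C}$, completing the argument.

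The main obstacle is largely bookkeeping rather than conceptual: one must be careful that the alternative SR characterization is stated for clauses of $F$ whereas the trigger $\comp C$ and the reduct $\restr{D}{\sigma}$ interact through the ``$\comp C \models \restr{D}{\sigma}$'' branch, which only makes sense when $\restr{D}{\sigma}$ is a genuine (non-tautological) clause; the trivialization case must be peeled off first so that Lemma~\ref{lem:reduct}(\ref{itm:reduct:clause}) is applicable. One should also double-check that the RUP-implies-entailment fact (stated in Section~\ref{ssc:interference}: a RUP clause over $F$ is entailed by $F$) is exactly what case (c) needs, and that nothing about the \emph{reduct} $\restr{F}{{\comp C}^\star}$ from the original definition is required once the \cite[Lemma 5]{BussT19} reformulation is used. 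No nontrivial estimate or construction is involved; the only genuine content is the correct routing of each $D \in F$ through the three cases and the observation that $I \models \comp C$ kills the $C$-disjunct in case (c).
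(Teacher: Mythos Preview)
Your proof is correct and follows essentially the same route as the paper's: the same case split on whether $I$ satisfies $C$, the same use of Lemma~\ref{lem:reduct} to handle the trivialization of $C$, and the same three-way case analysis on each $D \in F$ via the \cite[Lemma 5]{BussT19} reformulation. If anything, you are slightly more explicit than the paper in peeling off the trivialization subcase before invoking Lemma~\ref{lem:reduct}(\ref{itm:reduct:clause}).
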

\begin{proof}
    Let $I$ be any model with $I \models F$. Our goal is to show that the model
    $I^\prime = I \circ (\sigma \coloneq \comp C)$ satisfies $F \cup \Set{C}$.
    If $I \models C$ holds, then $I^\prime = I$, which satisfies both $F$ and $C$.
    
    Let us now show the case with $I \not\models C$, where we have $I^\prime = I \circ \sigma$.
    First observe that, since $C$ is an SR clause upon $\sigma$, the clause $C$ is
    trivialized by $\sigma$. Lemma~\ref{lem:reduct} then shows $I^\prime \models C$.
    Now, consider any clause $D \in F$. By the definition of SR clauses, either $\sigma$
    trivializes $D$, or $\comp C \models \restr{D}{\sigma}$,
    or the clause $C \vee \restr{D}{\sigma}$ is a RUP clause over $F$.

    As above, the first case implies $I^\prime \models D$.
    For the second and third cases, it suffices to show $I \models \restr{D}{\sigma}$,
    since Lemma~\ref{lem:reduct} then proves $I^\prime \models D$.
    For the second case, this follows from $I \models \comp C$.
    For the third case, it follows from $I \models F$ and $I \not\models C$.
    We have thus shown that $I^\prime \models F \cup \Set{C}$ as we wanted.
\end{proof}

As for PR clauses in~\cite{Rebola-PardoS18}, one can read Theorem~\ref{thm:srsemantics} as
claiming that SR clause introduction (and in general, interference-based reasoning)
performs reasoning \emph{without loss of generality}. In particular:
$C$ can be assumed in $F$ because, were it not to hold in a given model of $F$,
a transformation, namely the one given by $\sigma$, could be applied to the variables
such that $F$ is still satisfied after the transformation, and $C$ becomes satisfied too.

\subsection{DSR proofs as model-preserving proofs}
\label{ssc:entailment}

The entailment in Theorem~\ref{thm:srsemantics} raises the question whether SR proofs
can be equivalently expressed as model-preserving, DAG-shaped proofs over the corresponding
mutated clauses. Following~\cite{Rebola-PardoS18}, we can define a proof system as shown in
Figure~\ref{fig:inferences}.

\begin{figure}
\caption{A proof system over mutation clauses.}
\label{fig:inferences}
\centering
\begin{tabular}{r@{}l}
\begin{tikzpicture}[baseline = (resx.mid east)]
    \node[inner sep = 0pt, outer sep = 0pt, font = \footnotesize] (resx) {$\textsc{res }$};
\end{tikzpicture}&\begin{tikzpicture}[baseline = (resl)]
    \node[inner sep = 0pt, outer sep = 0pt, anchor = base] (res) {$\mut{\vec{\varepsilon}}{C \setminus \Set{l} \vee D \setminus \Set{\comp l}}$};
    \coordinate[above left = 2.5ex and 0.5ex of res.base west] (resl) {};
    \coordinate[above right = 2.5ex and 0.5ex of res.base east] (resr) {};
    \draw[] (resl) -- (resr);
    \node[inner sep = 0pt, outer sep = 0pt, above right = 1.3ex and 0.5ex of resl, anchor = base west] (resc) {$\mut{\vec{\varepsilon}}{C}$};
    \node[inner sep = 0pt, outer sep = 0pt, above left = 1.3ex and 0.5ex of resr, anchor = base east] (resd) {$\mut{\vec{\varepsilon}}{D}$};
\end{tikzpicture}\\[4ex]
\begin{tikzpicture}[baseline = (subx.mid east)]
    \node[inner sep = 0pt, outer sep = 0pt, font = \footnotesize] (subx) {$\textsc{sub }$};
\end{tikzpicture}&\begin{tikzpicture}[baseline = (subl)]
    \node[inner sep = 0pt, outer sep = 0pt, anchor = base] (subd) {$\mut{\vec{\varepsilon}}{D}$};
    \coordinate[above left = 2.5ex and 0.5ex of subd.base west] (subl) {};
    \coordinate[above right = 2.5ex and 0.5ex of subd.base east] (subr) {};
    \draw[] (subl) -- (subr);
    \node[inner sep = 0pt, outer sep = 0pt, above = 3.8ex of subd.base, anchor = base] (subc) {$\mut{\vec{\varepsilon}}{C}$};
    \node[inner sep = 0pt, outer sep = 0pt, right = 2ex of subr, anchor = mid west, font = \small] {where $C \subseteq D$};
\end{tikzpicture}\\[4ex]
\begin{tikzpicture}[baseline = (tautx.mid east)]
    \node[inner sep = 0pt, outer sep = 0pt, font = \footnotesize] (tautx) {$\nabla\textsc{taut }$};
\end{tikzpicture}&\begin{tikzpicture}[baseline = (tautl)]
    \node[inner sep = 0pt, outer sep = 0pt, anchor = base] (taut) {$\mut{\vec{\varepsilon}}{\mut{(\sigma \coloneq \comp C)}{C}}$};
    \coordinate[above left = 2.5ex and 0.5ex of taut.base west] (tautl) {};
    \coordinate[above right = 2.5ex and 0.5ex of taut.base east] (tautr) {};
    \draw[] (tautl) -- (tautr);
    \node[inner sep = 0pt, outer sep = 0pt, right = 2ex of tautr, anchor = mid west, font = \small] {where $\sigma$ trivializes $C$};
\end{tikzpicture}\\[4ex]
\begin{tikzpicture}[baseline = (intx.mid east)]
    \node[inner sep = 0pt, outer sep = 0pt, font = \footnotesize] (intx) {$\nabla\textsc{intro }$};
\end{tikzpicture}&\begin{tikzpicture}[baseline = (intl)]
    \node[inner sep = 0pt, outer sep = 0pt, anchor = base] (int) {$\mut{\vec{\varepsilon}}{\mut{(\sigma \coloneq Q)}{C}}$};
    \coordinate[above left = 2.5ex and 3.5ex of int.base west] (intl) {};
    \coordinate[above right = 2.5ex and 3.5ex of int.base east] (intr) {};
    \draw[] (intl) -- (intr);
    \node[inner sep = 0pt, outer sep = 0pt, above right = 1.3ex and 0.5ex of intl, anchor = base west] (intc) {$\mut{\vec{\varepsilon}}{C}$};
    \node[inner sep = 0pt, outer sep = 0pt, above left = 1.3ex and 0.5ex of intr, anchor = base east] (intd) {${}^{\displaystyle\star}\mut{\vec{\varepsilon}}{\comp Q \vee \restr{C}{\sigma}}$};
    \node[inner sep = 0pt, outer sep = 0pt, right = 2ex of intr, anchor = mid west, font = \small] {where $\star$ is only needed if $Q \not\models \restr{C}{\sigma}$};
\end{tikzpicture}\\[4ex]
\begin{tikzpicture}[baseline = (elimx.mid east)]
    \node[inner sep = 0pt, outer sep = 0pt, font = \footnotesize] (elimx) {$\nabla\textsc{elim }$};
\end{tikzpicture}&\begin{tikzpicture}[baseline = (eliml)]
    \node[inner sep = 0pt, outer sep = 0pt, anchor = base] (elim) {$\mut{\vec{\varepsilon}}{\restr{C}{\sigma}}$};
    \coordinate[above left = 2.5ex and 18.5ex of elim.base west] (eliml) {};
    \coordinate[above right = 2.5ex and 18.5ex of elim.base east] (elimr) {};
    \draw[] (eliml) -- (elimr);
    \node[inner sep = 0pt, outer sep = 0pt, above right = 1.3ex and 0.5ex of eliml, anchor = base west] (elimc) {$\mut{\vec{\varepsilon}}{\mut{(\sigma \coloneq Q)}{C}}$};
    \node[inner sep = 0pt, outer sep = 0pt, above left = 1.3ex and 0.5ex of elimr, anchor = base east] (elimd) {$\mut{\vec{\varepsilon}}{\mut{(\sigma \coloneq Q)}{\restr{C}{\sigma}}}$};
    \node[inner sep = 0pt, outer sep = 0pt, right = 2ex of elimr, anchor = mid west, font = \small] {where $\sigma$ does not trivialize $C$};
\end{tikzpicture}
\end{tabular}
\end{figure}

\begin{theorem}
    The inference rules in Figure~\ref{fig:inferences} are sound, i.e.\ any model satisfying the premises of each
    rule satisfies its conclusion as well.
\end{theorem}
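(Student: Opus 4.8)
The plan is to peel off the common mutation prefix $\vec{\varepsilon}$ from every rule and reduce to soundness of five elementary ``unprefixed'' statements. Since each cubic mutation $\varepsilon_i = (\sigma_i \coloneq Q_i)$ induces a function on models $I \mapsto I \circ \varepsilon_i$, the composite $f_{\vec{\varepsilon}}\colon I \mapsto I \circ \varepsilon_1 \circ \dots \circ \varepsilon_n$ is again a function on models, and by the semantics of $\nabla$ we have $I \models \mut{\vec{\varepsilon}}{\varphi}$ if and only if $f_{\vec{\varepsilon}}(I) \models \varphi$. Hence, if $I$ satisfies every premise $\mut{\vec{\varepsilon}}{\varphi_j}$ of one of the rules, then $I' \coloneqq f_{\vec{\varepsilon}}(I)$ satisfies every $\varphi_j$; so it suffices to prove, for an arbitrary model $I'$, that the unprefixed premises imply the unprefixed conclusion, and then $I \models \mut{\vec{\varepsilon}}{(\text{conclusion})}$ follows. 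This observation I would state once, up front, together with the small remark that $I' \models Q$ implies $I' \not\models \comp Q$ (the literals of the clause $\comp Q$ are exactly the complements of those of the cube $Q$), which is what makes the starred premise of $\nabla\textsc{intro}$ usable.

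For $\textsc{res}$ and $\textsc{sub}$ the unprefixed statements are the classical soundness arguments over the fixed model $I'$: for $\textsc{sub}$, a literal of $C \subseteq D$ satisfied by $I'$ lies in $D$; for $\textsc{res}$, $I'$ cannot satisfy both $l$ and $\comp l$, so whichever of the two it falsifies forces a satisfied literal in the matching side of the resolvent $C \setminus \Set{l} \vee D \setminus \Set{\comp l}$.

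The three $\nabla$-rules are where Lemma~\ref{lem:reduct} enters, through a case split on whether $I'$ satisfies the trigger cube. For $\nabla\textsc{taut}$ (no premises) I would show $I' \circ (\sigma \coloneq \comp C) \models C$: if $I' \not\models \comp C$ then $I' \models C$ and the mutation is inert; if $I' \models \comp C$ the model becomes $I' \circ \sigma$, and since $\sigma$ trivializes $C$, Lemma~\ref{lem:reduct}(\ref{itm:reduct:trivial}) gives $I' \circ \sigma \models C$. For $\nabla\textsc{elim}$, from $I' \circ (\sigma \coloneq Q) \models C$ and $I' \circ (\sigma \coloneq Q) \models \restr{C}{\sigma}$ I would derive $I' \models \restr{C}{\sigma}$: if $I' \not\models Q$ the second premise gives it directly; if $I' \models Q$ the first premise gives $I' \circ \sigma \models C$, which by Lemma~\ref{lem:reduct}(\ref{itm:reduct:clause}) — applicable since the side condition asserts $\sigma$ does not trivialize $C$ — is equivalent to $I' \models \restr{C}{\sigma}$. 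For $\nabla\textsc{intro}$, from $I' \models C$ and (when present) $I' \models \comp Q \vee \restr{C}{\sigma}$ I would derive $I' \circ (\sigma \coloneq Q) \models C$: if $I' \not\models Q$ the mutation is inert and the first premise suffices; if $I' \models Q$ the model becomes $I' \circ \sigma$, and $I' \circ \sigma \models C$ follows either from Lemma~\ref{lem:reduct}(\ref{itm:reduct:trivial}) when $\sigma$ trivializes $C$, or otherwise from $I' \models \restr{C}{\sigma}$ via Lemma~\ref{lem:reduct}(\ref{itm:reduct:clause}), where $I' \models \restr{C}{\sigma}$ holds because either $Q \models \restr{C}{\sigma}$ (so $I' \models Q$ is enough) or the starred premise is present and $I' \models Q$ excludes $I' \models \comp Q$.

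No step poses a genuine obstacle; the only care needed is bookkeeping — making sure $\restr{C}{\sigma}$ is invoked only when $\sigma$ does not trivialize $C$ (guaranteed by the side condition of $\nabla\textsc{elim}$ and by the extra subcase in $\nabla\textsc{intro}$), and verifying that the starred premise of $\nabla\textsc{intro}$ is exactly what is required in the single subcase where it is not redundant.
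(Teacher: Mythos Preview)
Your proposal is correct and follows essentially the same approach as the paper: peel off the common prefix $\vec{\varepsilon}$ (the paper phrases this as ``$\nabla$ preserves implications'' and works with $I \circ \vec{\varepsilon}$ throughout), then case-split on the trigger and appeal to Lemma~\ref{lem:reduct}. Your treatment is in fact slightly more careful than the paper's in one spot --- you explicitly handle the subcase of $\nabla\textsc{intro}$ where $\sigma$ trivializes $C$, which the paper's proof silently folds into its use of $\restr{C}{\sigma}$ --- but this is a presentational refinement, not a different argument.
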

\begin{proof}
    The proofs for $\textsc{res}$ and $\textsc{sub}$ are straightforward, since the $\nabla$ operator
    preserves implications.

    For $\nabla\textsc{taut}$, consider any model $I$, and let $I^\prime = I \circ \vec{\varepsilon} \circ (\sigma \coloneq \comp C)$.
    If $I \circ \vec{\varepsilon} \models C$, then $I^\prime = I \circ \vec{\varepsilon}$, which satisfies $C$.
    Otherwise, $I^\prime = I \circ \vec{\varepsilon} \circ \sigma$, and since $\sigma$ trivializes $C$ we have $I^\prime \models C$.

    Let us now show $\nabla\textsc{elim}$ correct. Consider any model $I$ satisfying the premises, and call
    $I^\prime = I \circ \vec{\varepsilon} \circ (\sigma \coloneq \comp C)$, so that $I^\prime \models C$ and $I^\prime \models \restr{C}{\sigma}$.
    If $I \circ \vec{\varepsilon} \models Q$, then $I^\prime = I \circ \vec{\varepsilon} \circ \sigma$ satisfies $C$;
    then $I \circ \vec{\varepsilon}$ satisfies $\restr{C}{\sigma}$ by Lemma~\ref{lem:reduct}.
    Otherwise, $I^\prime = I \circ \vec{\varepsilon}$ satisfies $\restr{C}{\sigma}$.

    Finally, for $\nabla\textsc{intro}$, let $I$ be any model satisfying the premises, and $I^\prime = I \circ \vec{\varepsilon} \circ (\sigma \coloneq Q)$.
    If $I \circ \vec{\varepsilon}$ satisfies $Q$ then either it also satisfies $\comp Q \vee \restr{C}{\sigma}$ or $Q \models \restr{C}{\sigma}$.
    Either way, we can conclude $I \circ \vec{\varepsilon} \models \restr{C}{\sigma}$, and since in this case we have $I^\prime = I \circ \vec{\varepsilon} \circ \sigma$,
    Lemma~\ref{lem:reduct} implies that $I^\prime \models C$.
    The other case is $I \circ \vec{\varepsilon} \not\models Q$, and in this case $I^\prime = I \circ \vec{\varepsilon}$, which satisfies $C$.
\end{proof}

Upon closer inspection of the proof of Theorem~\ref{thm:srsemantics},
the relation between the SR property and satisfiability-preservation becomes clearer.
When each clause $D \in F$ is required that either $\sigma$ trivializes $D$, or
$\comp C \models \restr{D}{\sigma}$, or the clause $C \vee \restr{D}{\sigma}$ is a RUP
clause over $F$, these conditions enable deriving
$\mut{(\sigma \coloneq \comp C)}{D}$ through rules $\nabla\textsc{taut}$
or $\nabla\textsc{intro}$ hold: the left-hand premise in $\nabla\textsc{intro}$
just means that $C$ has been derived earlier in the SR proof, while the
right-hand premise ensures that $C \vee \restr{D}{\sigma}$ can be derived (e.g.\ as a RUP clause).

On the other hand, $\nabla\textsc{taut}$ guarantees that $\mut{(\sigma \coloneq \comp C)}{C}$
can be derived, since the definition of SR clauses \emph{forces} $C$ to be trivialized
by $\sigma$. Given that $\nabla$ distributes over $\wedge$, these conditions are proving
$F \models \mut{(\sigma \coloneq \comp C)}{F}$.

Similar to~\cite{Rebola-PardoS18}, a translation of a DSR proof into a mutation logic proof then works as follows.
At each step in the DSR proof, we consider the list of rules $(\sigma_i \coloneq \comp C_i)$
corresponding to each SR clause $C_i$ introduced upon $\sigma_i$ earlier in the proof;
this list is potentially empty, e.g.\ at the start of the proof.
Let us denote this list by $\vec{\varepsilon}$. Then, at that point, all clauses
$D$ in the accumulated CNF formula have been derived as mutation clauses
$\mut{\vec{\varepsilon}}{D}$ in the translation.
The translation then proceeds as follows:
\begin{enumerate}
\item Deletions in the DSR proof are not translated.
\item A RUP clause $C$ can be derived through a subsumption-merge chain~\cite{PhilippR17}; rules $\textsc{res}$
and $\textsc{sub}$ can express a similar derivation of the mutation version of $C$.
\item For an SR clause $C$ over a CNF formula $F$ upon an atomic substitution $\sigma$,
we must derive mutation clauses $D_\nabla = \mut{\vec{\varepsilon}}{\mut{(\sigma \coloneq \comp C)}{D}}$ for
each clause $D \in F \cup \Set{C}$.
\begin{enumerate}
    \item When $\sigma$ trivializes $D$, the mutation clause $D_\nabla$ can be derived
    as an axiom through $\nabla\textsc{taut}$. Note that this case includes the case
    $D = C$ as well; this detail will become relevant in Section~\ref{ssc:wsr}.
    \item When $\comp C \models \restr{D}{\sigma}$, the mutation clause $D_\nabla$ can be
    infered from $\mut{\vec{\varepsilon}}{D}$ through $\nabla\textsc{intro}$. We know this premise has been
    previously derived because $D \in F$.
    \item When $C \vee \restr{D}{\sigma}$ is a RUP clause over $F$, a subsumption-merge chain
    deriving that clause with premises in $F$ exists~\cite{Gelder12, PhilippR17}.
    Replacing clauses $D^\prime$ with mutation clauses $\mut{\vec{\varepsilon}}{{D^\prime}}$,
    resolution inferences with $\nabla\textsc{res}$ and subsumption inferences with
    $\nabla\textsc{sub}$ in that proof then yields a derivation of
    $\mut{\vec{\varepsilon}}{C \vee \restr{D}{\sigma}}$ from previously derived mutation clauses.
    Finally, the rule $\nabla\textsc{intro}$ derives $D_\nabla$.
\end{enumerate}
\item At the end of the proof, the empty clause $\Cla{\,}$ is derived in the SR clause, and
the translation has derived the mutated clause $\mut{\varepsilon}{\Cla{\,}}$. The identity
$\Cla{\,} = \restr{\Cla{\,}}{\sigma}$ for all substitutions $\sigma$ ensures that
$\nabla\textsc{elim}$ can be iteratively applied to eliminate all mutation operators,
so that $\Cla{\,}$ is derived in the translation as well.
\end{enumerate}

\section{Extending DSR proofs}
\label{sec:extensions}

Understanding DSR proofs as mutation logic proofs opens the door to finer-grained
reasoning about interference-based proofs. Crucially, one of the main issues with
interference-based proofs is that deriving a clause involves reasoning over the whole
currently derived formula. In particular, interference-based proofs can be highly
\emph{non-monotonic}: deleting a clause in the current formula can enable new SR
introductions; and conversely, introducing a clause can disable previously available
SR introductions.

This is, at first sight, at odds with the translation described in Section~\ref{ssc:entailment}:
the proofs we obtain there are model-preserving, DAG-shaped proofs with clear
dependencies with other derived clauses. What can be derived in a subproof is never
affected by independent proof sub-DAGs, so clause introduction never disables SR
introductions. Deletions are even more intriguing, since they do not even exist in
the mutation logic framework (just as there is no notion of deletion in a resolution
proof DAG).

Another noticeable feature is how differently an SR clause $C$ over $F$ is treated in
the definition compared to the clauses $D \in F$. Even if at first sight it might
look reasonable to consider different conditions on the premises and on the conclusion,
the translation from Section~\ref{ssc:entailment} uses the same set of inference rules
to derive both $C_\nabla$ and $D_\nabla$.

\subsection{Weak substitution redundancy}
\label{ssc:wsr}

In the translation, the conditions of the definition are used to guarantee that
$C_\nabla$ can be derived through a $\nabla\textsc{taut}$ inference.
However, we have \emph{three} rules that can derive this
mutated clause, and the three are involved in deriving $D_\nabla$ for each $D \in F$.
We can thus relax the conditions over $C$ by demanding just the same as for each $D$:
either $\sigma$ must trivialize $C$, or $\comp C \models \restr{C}{\sigma}$,
or the clause $C \vee \restr{C}{\sigma}$ must be a RUP clause over $F$.

Furthermore, there is nothing in the translation forcing us to derive
$D_\nabla$ for \emph{each and all} clauses $D \in F$.
Rather, we must only do so for those clauses that the proof uses later on.
However, even if we do not need $D$ after the SR introduction, we still can 
use $D$ for the RUP checks of \emph{other} clauses in $F$. Note that this is not
quite the same as deleting $D$ before the SR introduction: doing so could make
the RUP checks of other clauses in $F$ fail.

These two details suggest an extension of substitution redundancy, which we call
\emph{weak substitution redundancy} (WSR).

\begin{definition}
A clause $C$ is a WSR clause over a CNF
formula $F$ upon an atomic substitution $\sigma$ modulo a subformula $\Delta \subseteq F$
whenever, for each clause $D \in (F \setminus \Delta) \cup \Set{C}$, either of the following holds:
\begin{alphaenumerate}
    \item $\sigma$ trivializes $D$.
    \item $\comp C \models \restr{D}{\sigma}$.
    \item $C \vee \restr{D}{\sigma}$ is a RUP clause over $F$.
\end{alphaenumerate}
\end{definition}

\begin{theorem}
Let $C$ be a WSR clause over a CNF formula $F$ upon an atomic substitution $\sigma$ modulo
a subformula $\Delta \subseteq F$. Then, $F \models \mut{(\sigma \coloneq \comp C)}{((F \setminus \Delta) \cup \Set{C})}$ holds.
In particular, if $F$ is satisfiable, then so is $(F \setminus \Delta) \cup \Set{C}$.
\end{theorem}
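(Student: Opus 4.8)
The plan is to adapt the proof of Theorem~\ref{thm:srsemantics}, changing only what the weaker hypotheses force. First I would fix an arbitrary model $I$ with $I \models F$ and set $I^\prime = I \circ (\sigma \coloneq \comp C)$, with the goal of showing $I^\prime \models (F \setminus \Delta) \cup \Set{C}$; the concluding ``satisfiability'' sentence then comes for free, since $I \models \mut{(\sigma \coloneq \comp C)}{((F \setminus \Delta) \cup \Set{C})}$ unfolds, by the semantics of $\nabla$, to precisely ``$I^\prime$ is a model of $(F \setminus \Delta) \cup \Set{C}$'', so $F \smodels (F \setminus \Delta) \cup \Set{C}$. As in Theorem~\ref{thm:srsemantics}, the case $I \models C$ is immediate: then $I^\prime = I$, which satisfies $F \supseteq F \setminus \Delta$ and satisfies $C$.

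The substantive case is $I \not\models C$, where $I^\prime = I \circ \sigma$ and $I \models \comp C$. Here I would take an arbitrary clause $D \in (F \setminus \Delta) \cup \Set{C}$ and split on which of the three WSR conditions holds for it. If $\sigma$ trivializes $D$, then Lemma~\ref{lem:reduct} gives $I \circ \sigma \models D$ directly. If $\comp C \models \restr{D}{\sigma}$ (so in particular $\sigma$ does not trivialize $D$ and $\restr{D}{\sigma}$ is a genuine clause), then $I \models \comp C$ yields $I \models \restr{D}{\sigma}$, and Lemma~\ref{lem:reduct} upgrades this to $I \circ \sigma \models D$. If $C \vee \restr{D}{\sigma}$ is a RUP clause over $F$, then $F \models C \vee \restr{D}{\sigma}$, hence $I \models C \vee \restr{D}{\sigma}$; since $I \not\models C$ this forces $I \models \restr{D}{\sigma}$, and again Lemma~\ref{lem:reduct} gives $I \circ \sigma \models D$. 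In every case $I^\prime \models D$, so $I^\prime \models (F \setminus \Delta) \cup \Set{C}$, which is what we wanted.

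The one genuine departure from Theorem~\ref{thm:srsemantics}, and the point I would be most careful about, is that WSR no longer forces $\sigma$ to trivialize $C$, so I cannot read off $I^\prime \models C$ from Lemma~\ref{lem:reduct} as in the SR argument. The resolution is conceptual rather than computational: $C$ is now treated as just another clause of the target formula, since $C \in (F \setminus \Delta) \cup \Set{C}$, so the WSR definition already supplies one of the three conditions for $D = C$, and the same three-way analysis above covers it (with $\restr{C}{\sigma}$ in place of $\restr{D}{\sigma}$). A second point worth flagging is why discarding $\Delta$ from the conclusion is harmless: the conditions are only imposed on the clauses of $(F \setminus \Delta) \cup \Set{C}$, yet the RUP tests in the third condition are still taken over the whole of $F$, so clauses of $\Delta$ remain available as premises of the subsumption--merge chains witnessing those RUPs even though they need not be preserved by $I^\prime$. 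Beyond tracking these two observations I do not expect any real obstacle; the argument otherwise mirrors both the SR proof and the $\nabla\textsc{taut}$/$\nabla\textsc{intro}$/$\textsc{res}$/$\textsc{sub}$ reasoning used in the translation of Section~\ref{ssc:entailment}.
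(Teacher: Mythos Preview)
Your proposal is correct and matches the paper's own proof, which is merely a two-sentence sketch pointing to Theorem~\ref{thm:srsemantics} and noting that the only change is treating $C$ via the same three-way case analysis as the clauses $D \in F \setminus \Delta$; you have simply spelled that out in full. Your remarks about $\Delta$ still being available for the RUP checks and about $C$ no longer being trivialized are exactly the observations the paper has in mind.
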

\begin{proof}
    Similar to the proof of Theorem~\ref{thm:srsemantics}. The main difference is that
    $F \models \mut{(\sigma \coloneq \comp C)}{C}$ must now be shown using the same reasoning
    as $F \models \mut{(\sigma \coloneq \comp C)}{D}$ for $D \in F$.
\end{proof}

The complexity of checking a WSR clause introduction is similar to that of a PR/SR check.
On the one hand, one extra RUP check might be needed if $C \vee \restr{C}{\sigma}$ is not
a tautology; on the other hand, one RUP check is spared for each clause in $\Delta$.

A minor benefit of WSR clauses is that, while not every RUP is a RAT, PR or SR clause,
every RUP clause is a WSR clause upon the identity atomic substitution.
The reason for this is that the condition that
the atomic substitution $\sigma$ must trivialize $\Cla{\,}$ always fails.
This allows reasoning about WSR proofs without the need for case discussion.
\begin{corollary}
    Let $C$ be clause and $F$ be a CNF formula. Then, $C$ is a RUP clause over $F$ if and only if
    $C$ is a WSR clause over $F$ upon the identity atomic substitution modulo $\emptyset$.
\end{corollary}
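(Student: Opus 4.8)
The plan is to unfold the definition of WSR with the effect substitution chosen to be the identity $\mathrm{id}$ and with $\Delta=\emptyset$, and then match the resulting condition against the definition of RUP clause. The key preliminary observation is that $\mathrm{id}$ never trivializes any clause: the first trivialization case would require a literal $l$ in the clause with $\mathrm{id}(l)=\top$, impossible since $\mathrm{id}$ fixes every literal; the second would require literals $l,k$ in the clause with $\mathrm{id}(l)=\comp{\mathrm{id}(k)}$, i.e.\ $l=\comp k$, impossible in a non-tautological clause. Consequently $\restr{D}{\mathrm{id}}=D$ for every clause $D$, and case~(a) of the WSR definition is never available. So $C$ is a WSR clause over $F$ upon $\mathrm{id}$ modulo $\emptyset$ exactly when, for every $D\in F\cup\Set{C}$, either $\comp C\models D$ or $C\vee D$ is a RUP clause over $F$.

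For the direction from WSR to RUP I would instantiate this condition at $D=C$ itself, which is a legal choice since $C\in(F\setminus\emptyset)\cup\Set{C}$. The disjunct $\comp C\models C$ fails: $\comp C$ is a non-contradictory cube, hence satisfiable, and every model of $\comp C$ falsifies $C$. Hence the remaining disjunct must hold, namely that $C\vee C=C$ (equal as sets of literals, and non-tautological since $C$ is) is a RUP clause over $F$, which is exactly the claim.

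For the converse, assume $C$ is a RUP clause over $F$ and fix any $D\in F\cup\Set{C}$; I need one of the two surviving disjuncts. If $C\vee D$ is a tautology then, since $C$ and $D$ are individually non-tautological, there is a literal $l\in C$ with $\comp l\in D$; then $\comp l\in\comp C$, so every model of $\comp C$ satisfies $\comp l\in D=\restr{D}{\mathrm{id}}$, giving $\comp C\models D$. Otherwise $C\vee D$ is a genuine clause with $C\subseteq C\vee D$, and I would conclude by invoking monotonicity of the RUP property under superclauses: enlarging the set of falsified assumption literals $\comp{C\vee D}\supseteq\comp C$ can only help unit propagation reach a conflict.

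I do not expect any real obstacle here; the only point that deserves explicit care is the monotonicity of RUP under taking superclauses used in the last step. It can either be cited, or argued directly from the fact that unit propagation started from a larger set of unit assumptions still reaches any conflict reachable from the smaller set, or obtained from the subsumption--merge chain characterization of Figure~\ref{fig:smchain} by replacing the target $A_0$ of the initial $\textsc{sub}$ step with the larger clause. A secondary subtlety worth stating is that the identity substitution is precisely the degenerate witness for which WSR's trivialization case~(a) is unavailable, which is exactly why---unlike for RAT, PR, and SR---every RUP clause qualifies as a WSR clause.
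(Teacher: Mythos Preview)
Your proof is correct. The paper does not supply a formal proof of this corollary at all: it is stated without proof, preceded only by the informal remark that ``every RUP clause is a WSR clause upon the identity atomic substitution'' because the trivialization requirement that SR imposes on $C$ (which fails e.g.\ for the empty clause) is absent from WSR. Your argument spells out exactly the definitional unfolding the paper leaves implicit---identity never trivializes, $\restr{D}{\mathrm{id}}=D$, instantiate at $D=C$ for one direction, and use RUP monotonicity under superclauses for the other---and handles the tautological-$C\vee D$ corner case via condition~(b), which is the right move given the paper's standing convention that clauses are non-tautological. There is nothing to compare against beyond noting that your proof is more explicit than anything the paper provides.
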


This, together with the embedded notion of deletions as $\Delta$, enables the definition of a proof
system with only one rule $\textbf{w: }C,\,\sigma \setminus \Delta$. This rule introduces clause $C$ and
deletes clauses in $\Delta$, and is correct whenever $C$ is a WSR clause over $F$ upon $\sigma$ modulo
$\Delta$. We call this proof system the \emph{WSR proof system}.

Note that the relaxations that derive the WSR proof system from DSR already exist in the literature,
albeit in different frameworks. In~\cite{GochtN21}, the first relaxation appears in a proof system
that derives pseudo-Boolean constraints rather than clauses. This work is extended later in~\cite{0001GMN22}
by explicitly maintaining two different accumulated \emph{core} and \emph{derived} formulas,
much in the vein of~\cite{JarvisaloHB12}. A \emph{transfer rule} allows moving core constraints to
the derived constraints, whereas a \emph{dominance-based strengthening rule} allows deriving redundant core constraints while only checking
SR-like conditions on other core constraints, hence treating derived constraints somewhat similarly to the ``modulo'' clauses in WSR.
However, the framework from~\cite{0001GMN22} targets optimization solving approaches; correspondingly,
dominance-based strengthening also includes side conditions on the objective function.
It is not immediately clear what these side conditions morph into when considered over a non-optimization framework,
nor whether the separation between core and derived constraints is fluid enough to fully simulate our second relaxation.

\section{Applications of WSR proofs}
\label{sec:applications}

So far, we have not yet shown any benefit of WSR over SR (or that they are not equivalent,
for that matter). In this section, we demonstrate techniques using WSR proofs that are unavailable
in previously existing interference-based proof systems.

\subsection{A shorter proof of the pigeonhole problem}
\label{ssc:php}

One of the first propositional problems that was found to only have exponential
resolution proofs was the pigeonhole problem~\cite{Haken85}. While polynomial proofs
in the extended resolution system had already been known for a decade~\cite{Cook76},
these proofs needed to introduce fresh variables to support definitions.
However, the seminal work on PR clauses presented a shorter DPR proof that did not
use extra variables, using $O(n^3)$ instructions~\cite{HeuleKB17}.

In~\cite{Rebola-PardoS18} an analysis of this proof from the overwrite logic perspective
was presented; let us briefly reproduce it here. The pigeonhole problem encodes the unsatisfiable
problem ``find an assignment of $n$ pigeons to $n-1$ pigeonholes such that no two pigeons share
the same hole''.
We consider variables $p_{ij}$ encoding ``pigeon $i$ is in hole $j$''. Let us define the following clauses:
\begin{align*}
    H_{in} = {} & \Cla{p_{ij} \mid 1 \leq j < n} \quad \text{for $n > 0$ and $1 \leq i \leq n$} \\
    P_{ijk} = {} & \Cla{\comp{p_{ik}}\,\comp{p_{jk}}} \quad \text{for $1 \leq i < j$ and $1 \leq k$} \\
    L_{ijn} = {} & \Cla{\comp{p_{i (n - 1)}}\,\comp{p_{nj}}} \quad \text{for $n > 1$, $1 \leq i < n$ and $1 \leq j < n - 1$} \\
    R_{in} = {} & \Cla{\comp{p_{i (n - 1)}}} \quad \text{for $n > 1$ and $i < n$}
\end{align*}
Briefly, $H_{in}$ says that pigeon $i$ stays in some hole $1 \leq j < n$; $P_{ijk}$ prevents that pigeons $i$
and $j$ both occupy hole $k$; $L_{ijn}$ can be read as ``if the pigeon $i$ is in the last hole, then hole $j$ does not contain the last pigeon'';
and finally $R_{in}$ prevents that pigeon $i$ is in the last hole.
The pigeonhole problem for $n$ pigeons is then encoded by
\begin{equation*}
\Pi_n = \Set{H_{in} \mid 1 \leq i \leq n} \cup \Set{P_{ijk} \mid 1 \leq i < j \leq n \text{ and }1 \leq k < n}
\end{equation*}
Intuitively, a refutation of $\Pi_n$ proceeds by noting that, without loss of generality, each pigeon $i < n$ is not in hole $n - 1$;
were this not the case, one can swap pigeon $i$ with pigeon $n$ (which is not in hole $n$ because that would violate $P_{in(n-1)}$).
Then, pigeons $1, \dots, (n - 1)$ and holes $1, \dots, (n - 2)$ are in the conditions of the pigeonhole problem $\Pi_{n - 1}$.
This process can be iterated until $\Pi_1$ is reached, which is trivially unsatisfiable.

The proof from \cite{HeuleKB17} follows this reasoning, but a single PR clause is not expressive enough to encode
swaps: the only mutations that it can handle are setting variables to true or false.
Thus, the proof first derives clauses $L_{ijn}$ for $1 \leq i < n$ and $1 \leq j < n - 1$ as PR clauses with
the cube $Q_{ijn} = \Cub{\comp{p_{i (n - 1)}}\,\comp{p_{nj}}p_{ij}p_{n (n - 1)}}$. This encodes the following reasoning:
without loss of generality, if the pigeon $i$ is in the last hole, then hole $j$ does not contain the last pigeon;
were this not the case, ensure that pigeon $i$ is not in the last hole but in the hole $j$ instead, and that
the last pigeon is not in hole $j$ but in the last hole instead. Once the clauses $L_{ijn}$ have been derived for
each $1 \leq i < n$, the clause $R_{in}$ ensuring that pigeon $i$ is not in the last hole can be derived as a RUP clause.

When considered together, the mutations $Q_{ijn}^\star$ for $1 \leq j < n - 1$ express the atomic substitution that swaps
pigeons $i$ and $n$, that is:
\begin{equation*}
    \sigma_{in} = \Set{p_{ij} \mapsto p_{nj}, p_{nj} \mapsto p_{ij} \mid 1 \leq j < n} \quad \text{for $1 \leq i < n$}
\end{equation*}
DSR can handle this kind of mutation. Let us write a DSR derivation of $\Pi_{n - 1}$ from $\Pi_{n}$
(where we are omitting some trailing deletions for simplicity):
\begin{equation*}
    (\textbf{i: }R_{1n},\,\sigma_{1n}),\dots,(\textbf{i: }R_{(n-1)n},\,\sigma_{(n-1)n}),(\textbf{i: }H_{1(n-1)}),\dots,(\textbf{i: }H_{(n-1)(n-1)})
\end{equation*}
Clauses $H_{i(n-1)}$ can be introduced as RUP clauses, since they result from resolution on $H_{in}$ and $R_{in}$.
Furthermore, one would hope for the $R_{in}$ clauses to be SR clauses over the preceding formula upon $\sigma_{in}$.
Let us check this. For each clause $D$ in the preceding formula $F$, we need to check that either of the following holds:
\begin{alphaenumerate}
    \item $D$ is trivialized by $\sigma_{in}$
    \item $\Cub{p_{i (n - 1)}} \models \restr{D}{\sigma_{in}}$
    \item the clause $D_\nabla = \Cla{\comp {p_{i (n - 1)}}} \vee \restr{D}{\sigma_{in}}$ is a RUP clause over $F$.
\end{alphaenumerate}    
Checking case by case one can see that the reduct $\restr{D}{\sigma_{in}}$ is always another clause in $F$,
so $D_\nabla$ is either a tautology or can be derived by subsumption from $F$ (which implies it is a RUP clause).

The clause $R_{in}$, nevertheless, is not an SR clause over $F$ upon $\sigma_{in}$, because it is not trivialized by $\sigma_{in}$.
Observe, however, that
\begin{equation*}
    C_\nabla = C \vee \restr{C}{\sigma_{in}} = \Cla{\comp {p_{i(n - 1)}}\,\comp {p_{n(n - 1)}}} = P_{in(n - 1)} \in F
\end{equation*}
In particular, $C_\nabla$ it is a RUP clause over $F$. Hence, $R_{in}$ is in fact a WSR clause over $F$ upon $\sigma_{in}$ modulo $\emptyset$.
Hence, we can define the WSR derivation $\pi_n$ of $\Pi_1$ from $\Pi_n$ given in Figure~\ref{fig:php} for $n > 1$.
The derivation $\pi_n$ has $O(n^2)$ instructions, and is in fact a refutation, since $\Cla{} \in \Pi_1$.

\begin{figure}
\caption{A WSR refutation $\pi_n$ of the pigeonhole problem $\Pi_n$ for $n \geq 1$ containing only $O(n^2)$ instructions.
Here, $\pi_1$ is the empty list and $\textrm{id}$ represents the identity atomic substitution.}
\label{fig:php}
\centering
\begin{tabular}{l}
    $\textbf{w: }R_{1n},\,\sigma_{1n}\setminus\emptyset$\\
    $\textbf{w: }R_{2n},\,\sigma_{2n}\setminus\emptyset$\\
    \qquad$\vdots$\\
    $\textbf{w: }R_{(n-1)n},\,\sigma_{(n-1)n}\setminus\emptyset$\\
    $\textbf{w: }H_{1(n-1)},\,\textrm{id}\setminus\Set{R_{1n}}$\\
    $\textbf{w: }H_{2(n-1)},\,\textrm{id}\setminus\Set{R_{2n}}$\\
    \qquad$\vdots$\\
    $\textbf{w: }H_{(n-1)(n-1)},\,\textrm{id}\setminus(\Set{R_{1n}} \cup (\Pi_n \setminus \Pi_{n - 1}))$\\
    $\pi_{n-1}$
\end{tabular}
\end{figure}

\subsection{Smaller cores and shorter checking runtime}
\label{ssc:cores}
SAT solvers generate proofs which often introduce clauses uninvolved in the derivation of a contradiction.
This is practically unavoidable because of how solvers generate proofs:
solvers mostly just log every learnt clause~\cite{GoldbergN03},
and at that point the solver does not know what learnt clauses will be useful.

State-of-the-art proof checkers thus validate the proof
backwards~\cite{HeuleHW13,WetzlerHH14}. Starting from the empty clause at the end of the proof,
the checker finds out what clauses are needed to derive each clause as a RUP clause.
Required clauses are then \emph{marked}; as the checker proceeds backwards, unmarked clauses
are skipped. If one were to visualize a RUP proof as a DAG, this amounts to only
checking the connected component that actually derives the empty clause while disregarding
all other connected components in the DAG.

Backwards checking has three interesting consequences. First, it vastly improves checking runtime:
not only are checks for unmarked clauses skipped, but also their premises are skipped as well
(unless they are used to derive another marked clause). Second, a shorter, \emph{trimmed} proof
can be extracted as a by-product of checking. Finally, by the time the checker reaches the start
of the proof, the marked clauses in the input formula form a (not necessarily minimal) unsatisfiable core.

\paragraph{Backwards checking in interference-based proofs}

Interference-based proofs do not have DAG-like dependencies as RUP proofs have.
Let us formalize the problem of backwards checking in this situation.
We assume that the checker keeps track of a CNF formula $F$ and
marked clauses $M \subseteq F$ as it proceeds backwards through the proof. When a
RAT/PR/SR introduction $\textbf{i: }C,\,\omega$ is reached with $C \in M$, the checker removes
$C$ from both the formula $F$ and the marked clauses $M$ and validates the
corresponding RAT/PR/SR introduction. The goal then is to find some (preferably small) subformula $M^\prime$
with $M \subseteq M^\prime \subseteq F$ such that $C$ is a RAT/PR/SR clause upon $\omega$ over $M^\prime$;
this will be the new set of marked clauses.

In the best case scenario, $C$ satisfies the corresponding redundancy property over $M$,
so the checker can move on with $M^\prime = M$.
There is only one way the redundancy property might not hold over $M$:
when one of the RUP checks from Definition~\ref{def:redundancy} fails over $M$
(but still succeeds over $F$), the premises of the induced subsumption-merge
chain must become marked; let us (conspicuously) call this set of \emph{newly} marked clauses $\Delta$.
The problem we are tackling is whether clauses in $\Delta$ \emph{really} need their own RUP check as
mandated by Definition~\ref{def:redundancy}.

For RAT, it turns out, they do not: one can show that, for a witness literal $l$ in a RAT check,
the clauses in $\Delta$ never contain $\comp l$, so they never trigger further RUP checks.
Such a convenient coincidence does not hold for PR or SR, though. In order to establish
that $C$ is a PR/SR clause upon some $M^\prime$, the clauses in $\Delta$ must undergo their own
RUP check, which might add new clauses to $\Delta$, and so on until fixpoint.

This is nevertheless wasteful. By the time the first $\Delta$ has been computed,
introducing $C$ can already be claimed to be satisfiability-preserving, \emph{just not as a PR/SR}:
the conditions above prove that $C$ is a WSR clause upon $\omega$ over $M \cup \Delta$ modulo $\Delta$.
This means that a proof checker (even one that only checks PR/SR) can simply set $M^\prime = M \cup \Delta$
and continue checking the rest of the proof.

To the best of our knowledge, existing checkers do not deal with this situation in
an optimal way, e.g.\ 
the reference DPR checker \texttt{dpr-trim} resorts instead to the fixpoint method%
\footnote{See \url{https://github.com/marijnheule/dpr-trim/blob/83eb40b9028100aca63a419eb6d08b45acf517ad/dpr-trim.c}, line 660.}.
Note that the fixpoint method always produces a larger $M^\prime$ than the WSR-based
method, with associated longer runtimes, larger unsatisfiability cores and longer
trimmed proofs.

Even if for (uncertified) checking WSR only seems relevant at a theoretical level,
state-of-the-art proof checkers emit trimmed, annotated proofs that can be further checked
with a verified tool~\cite{Cruz-FilipeHHKS17, TanHM21}. The formats these annotated proofs use,
such as LRAT or LPR, are based on RAT/PR, and so the fixpoint method is needed
if an annotated proof must be emitted in one of these formats.
Either way, the need for the fixpoint method could be removed by emitting WSR-based annotated proofs.

\begin{example}
    Let us define three CNF formulas. The formula $M$ contains clauses:
    \begin{align*}
        \Cla{a\,\comp c\,x} \qquad \Cla{\comp a\,\comp u\,\comp v\,\comp x} \qquad \Cla{c\,\comp u\,\comp v\,x} & \qquad \Cla{a\,\comp x\,\comp y\,\comp z} \qquad \Cla{a\,\comp c\,\comp x\,y} \qquad \Cla{\comp a\,b\,u}\\
        \Cla{c\,u} \qquad \Cla{\comp u\,y\,z} \qquad \Cla{\comp a\,\comp b\,\comp c} \qquad &\Cla{c\,\comp x\,\comp z}^{\bullet} \qquad \Cla{\comp c\,\comp x z}^{\bullet} \qquad \Cla{c\,\comp x\,\comp y}^{\bullet} \qquad \Cla{\comp a\,b\,\comp u\,v\,\comp x}^{\bullet}
    \end{align*}
    The formula $\Delta$ contains:
    \begin{equation*}
        \Cla{b\,\comp u\,x} \qquad \Cla{\comp b\,\comp t\,v\,x\,\comp y} \qquad \Cla{\comp b\,t\,v\,x\,\comp z} \qquad \Cla{t\,v\,\comp y\,z} \qquad \Cla{\comp t\,v\,y\,\comp z}
    \end{equation*}
    Finally, $\Gamma = \Set{\Cla{\comp b\,x\,\comp u\,y\,\comp z}}$. Let us assume that a proof checker is
    checking a DSR refutation of the unsafistiable formula $F = M \cup \Delta \cup \Gamma$ backwards. It eventually reaches the first instruction, an SR clause introduction
    for $C = \Cla{x\,\comp u}$ upon the atomic substitution $\sigma = \Set{x \mapsto \top, a \mapsto \top, v \mapsto \bot, t \mapsto \top}$. At this point, the clauses in $M$
    (in addition to $C$) have been marked for checking; since this is the first instruction, the marked clauses after checking $C$ for SR are an unsatisfiable core of $F$.
    One can check that $\sigma$ trivializes $C$, and that all clauses in $M$ except for the ones highlighted with~$\bullet$ satisfy the conditions in Definition~\ref{def:redundancy}
    using propagation clauses exclusively from $M$.
    For the highlighted clauses, propagating with clauses from $M \cup \Delta$ does suffice to satisfy Definition~\ref{def:redundancy}.

    As we learnt in Section~\ref{sec:extensions}, we can now stop checking: $C$ is a WSR clause over the formula $M \cup \Delta$ modulo $\Delta$;
    the newly marked clauses (which form the generated unsatisfiable core) are thus $M \cup \Delta$.
    Current checkers will nevertheless not stop here, since SR is more restrictive than WSR. In particular, they check the newly marked clauses $\Delta$
    for the conditions in Definition~\ref{def:redundancy} as well. As it turns out, $C$ is not even an SR clause over $M \cup \Delta$, but only over $F$:
    for the RUP check for $C \vee \restr{\Cla{\comp t\,v\,y\,\comp z}}{\sigma}$ to succeed, the clause in $\Gamma$ is needed too.
    That clause becomes subsequently marked, and a further check is performed for it. This check finally succeeds, reaching a fixpoint.

    This example shows that SR marks strictly more clauses than WSR, which translates into larger generated unsatisfiable cores and trimmed proofs, as well
    as a longer checking runtime since the extra marked clauses will be themselves checked.
\end{example}

\subsection{Interference-free interference lemmas}
\label{ssc:lemmas}

The differences between SR and WSR presented in Section~\ref{ssc:cores} can too be exploited
during proof generation. While the largest share of a proof generated by a state-of-the-art SAT solver
consists of learnt clauses introduced as RUP clauses as well as clause deletions, inprocessing techniques
also contribute to the proof. Typically, an inprocessing technique performs some reasoning
and then a (to some extent) hardcoded proof fragment of the results is generated.
No proof search is performed; rather, the specialized reasoning performed by the inprocessing technique
is translated into the target proof system by a method that has previously been proven correct (on paper, not \emph{in silico}).

Interference-based proof systems are notable for their ability to generate succint proof fragments
for many inprocessing techniques and non-CDCL methods, including parity reasoning~\cite{PhilippR16,GochtN21},
symmetry breaking~\cite{HeuleHW15} and BDD-based reasoning~\cite{BryantBH22}. Devising these proofs
is complex for several reasons; among them is that, in an interference-based proof system, introduced lemmas
may need further lemmas for satisfy Definition~\ref{def:redundancy}.

Let us assume we want to generate a proof fragment deriving a clause
$C$ as an SR clause from $F$ upon some atomic substitution $\sigma$.
The clause $C$ has been obtained through some inprocessing technique,
and we know that all the clauses $D_\nabla = C \vee \restr{D}{\sigma}$ for $D \in F$ are implied by $C$
because of some property of the inprocessing technique.
However, we might find that some of the $D_\nabla$ are not RUPs over $F$;
after all, RUP is just a criterion for entailment.
We can derive some additional clauses (i.e.\ lemmas) $L^1,\dots,L^n$ from $F$ such that
$D_\nabla$ is a RUP over $F \cup L^1,\dots,L^n$, but now the definition of SR clauses demands that
the $L^i_\nabla$ are RUP clauses as well, which might need additional clauses and so on.

This is, in essence, the proof generation version of the proof checking situation from Section~\ref{ssc:cores}.
Just as we did there, with WSR we can completely bypass the need to prove that the $L^i_\nabla$ are RUP clauses:
$C$ is already a WSR clause over $F \cup \Set{L^1,\dots,L^n}$ upon $\sigma$ modulo $\Set{L^1,\dots, L^n}$.
In other words, WSR allows introducing interference lemmas that need not be taken into account for RUP checks.

\begin{example}
\label{exp:lemmas}
Let us consider the CNF formula $F$ containing clauses:
\begin{align*}
    \Cla{a\,b\,\comp x\,y} \qquad \Cla{a\,b\,x\,y\,\comp z} \qquad \Cla{a\,b\,x\,z} \qquad &\Cla{\comp a\,u\,v} \qquad \Cla{\comp c\,u\,v} \qquad \Cla{a\,c\,\comp b\,y} \qquad \Cla{a\,c\,b\,\comp y}\\
    \Cla{c\,\comp b\,\comp y\,\comp z} \qquad \Cla{c\,\comp b\,x\,\comp y\,z} \qquad &\Cla{c\,\comp b\,\comp x\,z} \qquad \Cla{\comp u\,v} \qquad \Cla{u\,\comp v} \qquad \Cla{\comp u\,\comp v}
\end{align*}
We want to derive the clause $C = \Cla{x}$. Unfortunately, $C$ is not a RUP clause over $F$, so we try to introduce it
as an SR clause upon the atomic substitution $\sigma = \Set{x \mapsto \top, y \mapsto z, z \mapsto y}$. This \emph{almost} works:
all the conditions in Definition~\ref{def:redundancy} hold,
except for $C \vee \restr{\Cla{\comp b\,c\,\comp x\,z}}{\sigma} = \Cla{\comp b\,c\,x\,y}$ not being a RUP clause over $F$.
We can derive some RUP lemmas from $F$, for example $L_1 = \Cla{\comp a\,y\,v}$ and then $L_2 = \Cla{\comp a\,y}$;
the clause $\Cla{\comp b\,c\,x\,y}$ is indeed a RUP clause over $F \cup \Set{L_1, L_2}$.

Here is where WSR and SR show their differences again. Under WSR, we can already introduce $C$ in $F$,
because the paragraph above implies that $C$ is a WSR clause over $F \cup \Set{L_1, L_2}$ upon $\sigma$ modulo $\Set{L_1, L_2}$.
This is not the case for SR, though: because the clause $C \vee \restr{L_2}{\sigma} = \Cla{\comp a\,x\,z}$ is not a RUP clause
over $F \cup \Set{L_1, L_2}$, the clause $C$ is not SR over $F$ upon $\sigma$.
We would need to find additional lemmas to make it so, which might then need further lemmas themselves.
\end{example}

\section{Conclusion}
\label{sec:conclusion}

We have presented a generalization of the SR redundancy notion, called weak substitution redundancy (WSR).
This extension is straightforward once the semantics of interference have been understood,
which we achieve by extending the overwrite logic framework from~\cite{Rebola-PardoS18} into
mutation logic, which is able to handle atomic substitutions.

The main differences between SR and WSR are the weakening of one unnecessarily strong condition in the definition,
and the specification of a set of clauses that can be used for ensuring the interference conditions
but will not participate in interference themselves.

These minor differences have an impact on the versatility of the proof system. Shorter proofs can be obtained,
lemmas can be used in a less obstrusive way, the efficiency of the backwards checking algorithm is enhanced,
and smaller unsatisfiable cores and trimmed proofs can be generated.

\bibliography{main}

\end{document}